\newcommand{\nc}{\newcommand}
\nc{\tcr}{\textcolor{red}}
\nc{\rnc}{\renewcommand}
\nc{\nn}{\nonumber}
\nc{\der}{{\partial}}
\rnc{\Im}{{\rm{Im}\,}}
\rnc{\Re}{{\rm{Re}\,}}
\nc{\db}{\displaybreak[0]\\}
\nc{\bra}{\langle}
\nc{\ket}{\rangle}
\nc{\bs}{\boldsymbol}
\DeclareMathOperator{\Tr}{Tr}
\DeclareMathOperator{\End}{End}
\newtheorem{theorem}{Theorem}[section]
\newtheorem{lemma}[theorem]{Lemma}
\newtheorem{proposition}[theorem]{Proposition}
\newtheorem{corollary}[theorem]{Corollary}
\theoremstyle{definition}
\newtheorem{definition}[theorem]{Definition}
\newtheorem{example}[theorem]{Example}
\numberwithin{equation}{section}
\numberwithin{equation}{section}
\begin{document}%
%

\title{Quantum integrable combinatorics of Schur polynomials}

\author{
Kohei Motegi$^1$\thanks{E-mail: motegi@gokutan.c.u-tokyo.ac.jp} \,
and
Kazumitsu Sakai$^2$\thanks{E-mail: sakai@gokutan.c.u-tokyo.ac.jp}
\\\\
$^1${\it Faculty of Marine Technology, Tokyo University of Marine Science and Technology,}\\
 {\it Etchujima 2-1-6, Koto-Ku, Tokyo, 135-8533, Japan} \\
\\
$^2${\it Institute of physics, University of Tokyo,} \\ 
{\it Komaba 3-8-1, Meguro-ku, Tokyo 153-8902, Japan}
\\\\
\\
}

\date{\today}

\maketitle

\begin{abstract}
We examine and present new combinatorics for the Schur polynomials
from the viewpoint of quantum integrability.
We introduce and analyze an integrable six-vertex model
which can be viewed as a certain degeneration model from
a $t$-deformed boson model.
By a detailed analysis of the wavefunction from the quantum inverse scattering
method,
we present a novel combinatorial formula which expresses the Schur polynomials
by using an additional parameter, which is in the same sense but
different from the Tokuyama formula.
We also give an algebraic analytic proof for the Cauchy identity
and make applications of the domain wall boundary partition functions
to the enumeration of alternating sign matrices.
\end{abstract}

\section{Introduction}
Schur polynomials is an ubiquitous object in
mathematics and mathematical physics,
ranging from representation theory, combinatorics, enumerative geometry
to knot theory.
Being one of the most fundamental symmetric polynomials,
extensive studies and numerous combinatorial identities have been found for the
Schur polynomials.
Many mathematical objects are introduced for which the
Schur polynomials serves as a building block since
one can use various properties found for the Schur polynomials.
One typical example is a stochastic process called
the Schur process \cite{OR} whose
probability measure is given by the Schur polynomials,
from its property many useful combinatorial formulae
can be employed to study the dynamics of the process.

The Schur polynomials itself has many expressions.
One of the most famous ones is the Jacobi-Trudi identity
which expresses Schur polynomials in terms of elementary symmetric polynomials.
Besides these traditional identities,
a very interesting formula called the Tokuyama formula \cite{To,Ok,HK}
was found which expresses Schur polynomials in a combinatorial form.
The distinctive feature of the formula is that the expression
is given in terms not only of the symmetric variables
but also with an additional parameter which does not appear in the original
definition of the Schur polynomials.
Specializing the additional parameter, the 
Tokuyama formula reduces to the determinant formula,
Weyl character formula and so on.
In this sense, the Tokuyama formula can be viewed as a deformation
of Weyl character.

Recently, the Tokuyama formula has found its interpretation in terms of
integrable vertex models \cite{BBF}.
The idea to understand this Tokuyama formula from the point of view of
quantum integrability
is to extend the $L$-operator for the free-fermion model (the corresponding $R$-matrix is the trigonometric Felderhof model) to
an $L$-operator including an additional parameter
besides the spectral parameter. The newly-introduced parameter plays the role
of refining the combinatorial expression for the Schur polynomials.
This idea of relating Tokuyama formula which is a 
deformation of Weyl character formula to integrable vertex models
have been recently generalized to the factorial Schur polynomials \cite{BMN}
and other types of symmetric polynomials \cite{Tab,BBCG,BS,HK2}
by introducing inhomogeneous parameters or
changing boundary conditions.

In this paper, we present another type of combinatorial formula for the
Schur polynomials with an additional parameter
by investigating a partition function of an integrable six-vertex model.
We find the following combinatorial formula for the Schur polynomials
\begin{align}
s_\lambda(\bs{z})=&
\frac{1}{\prod_{1 \le j < k \le N}(z_j+z_k+2 \beta z_j z_k)}
\sum_{x^{(N)} \succ x^{(N-1)} \succ \dots \succ x^{(0)}=\phi} \prod_{k=1}^N
\Bigg\{z_k^{\sum_{j=1}^k x_j^{(k)}-\sum_{j=1}^{k-1} x_j^{(k-1)}-1} \nonumber \\
\times&\Bigg( \frac{2(1+\beta z_k)}{1+2 \beta z_k} \Bigg)^{\#(x^{(k)}|x^{(k-1)})-1}
\prod_{j=1}^{k-1}
\Bigg(
1+2 \beta z_k(1-\delta_{x_j^{(k-1)} x_{j+1}^{(k)}})
\Bigg)
\Bigg\}, \label{mainformula}
\end{align}
where $\beta$ is an arbitrary parameter.
$x^{(k)}=(x_1^{(k)},\dots,x_k^{(k)})$, $k=1,\dots,N$ are strict partitions
satisfying the interlacing relations
$x^{(N)} \succ x^{(N-1)} \succ \dots \succ x^{(0)}=\phi$,
$x^{(N)}$ is fixed by the Young diagram
$\lambda=(\lambda_1,\dots,\lambda_N)$ as
$\lambda_j=x_j^{(N)}-N+j-1$, and $\#(y|x)$ denotes
the number of parts in $y$ which are not in $x$.

We obtain this combinatorial formula from an integrable six-vertex model.
The six-vertex model we consider can be regarded as a certain
degeneration of a $t$-deformed non-Hermitian boson model introduced in
\cite{SW}. The corresponding wavefunction seems to be analyzed by the
coordinate Bethe ansatz \cite{Ta} which at the degeneration point
is given by the Schur polynomials times a factor including
an additional parameter.
To obtain the combinatorial formula \eqref{mainformula},
we investigate the corresponding model from the point of view of the
quantum inverse scattering method, i.e., starting from the $L$-operator
which is the most fundamental object in quantum integrable models.
We view the wavefunction as a partition function
of a six-vertex model and evaluate it in two ways.
First, we evaluate the partition function directly to show that
it is given by a Schur polynomials times an additional factor.
Another way of evaluation is to take the viewpoint that
the partition function consists of a layer of $B$-operators,
and calculate the matrix elements of a single $B$-operator
to show a summation formula for the partition function.
Combining the two expressions give the combinatorial formula
\eqref{mainformula}.

Besides the combinatorial formula,
we examine the scalar products and present an algebraic analytic
proof for its determinant form,
which combined with the wavefunction gives us the celebrated Cauchy formula
for the Schur polynomials.
We also make an application of the domain wall boundary partition function
to show a simple formula for a particular generating function of
alternating sign matrices.

This paper is organized as follows.
We introduce the integrable six-vertex model as a particular
degeneration of a non-Hermitian $t$-deformed boson model in the next section.
We give an algebraic analytic proof for the scalar product in section 3.
In sections 4 and 5, we evaluate the matrix elements and
the wavefunction of the six-vertex model.
We combine the results of sections 3, 4 and 5 to present
the combinatorial formulae for the Schur polynomials in section 6.
In section 7, we give an application of the domain wall boundary
partition function to a generating function of the alternating sign matrices.

\section{Non-Hermitian $t$-deformed boson
model and reduction to the six-vertex model}

We introduce the $t$-deformed non-Hermitian boson model \cite{SW,Ta}
and its equivalent six-vertex model at the point $t=-1$.
The model is characterized by
$t$-deformed boson algebra
with generators $N,B,B^\dagger$ satisfying the
following relations
\begin{align}
[B, B^\dagger]=t^{N}(1-t), \quad
[N, B]=-B, \quad
[N, B^\dagger]=B^\dagger.
\end{align}
These generators act on a Fock space $\mathcal{F}$ spanned by
the orthonormal basis $| n \rangle \ (n\ge 0)$ as
\begin{align}
B|n \rangle=|n-1 \rangle, \
B^\dagger |n \rangle=(1-t^{n+1})|n+1 \rangle, \
N|n \rangle= n|n \rangle.
\end{align}
The operators 
the dual orthonormal basis $\langle n|$ ($n\ge 0$)
satisfying $\langle n|m \rangle=\delta_{nm}$ as
\begin{align}
\langle n| B^\dagger=(1-t^n) \langle n-1|, \
\langle n| B=\langle n+1|, \
\langle n| N=n \langle n|.
\end{align}

We consider the following $L$-operator which is a slightly
deformation of the one in \cite{SW}
\begin{align}
\mathcal{L}_{a j}(v)=
\begin{pmatrix}
1-\beta v t^{N_j} & v B_j^\dagger  \\
B_j & v
\end{pmatrix}_a,
\label{Lop-boson}
\end{align}
acting on the tensor product $W_a \otimes \mathcal{F}_j$ of the
complex two-dimensional space $W_a$ and the Fock space at the
$j$th site $\mathcal{F}_j$.
Let us denote the orthonormal basis of $W_a$ and its dual as
$\{|0 \rangle_a, |1 \rangle_a \}$ and $\{{}_a \langle 0|, {}_a \langle 1|\}$.
The explicit forms of the matrix elements of  \eqref{Lop-boson}
in the orthonormal basis are
\begin{align}
{}_a \langle 0| {}_j \langle m_j | \mathcal{L}_{a j}(v)
|0 \rangle_a | n_j \rangle_j
&=(1-\beta v t^{n_j}) \delta_{m_j, n_j}, \label{Lop1} \\
{}_a \langle 1| {}_j \langle m_j | \mathcal{L}_{a j}(v)
|0 \rangle_a | n_j \rangle_j
&=\delta_{m_j, n_j-1}, \\
{}_a \langle 0| {}_j \langle m_j | \mathcal{L}_{a j}(v)
|1 \rangle_a | n_j \rangle_j
&=v(1-t^{n_j+1}) \delta_{m_j, n_j+1}, \label{specialmatrixelement} \\
{}_a \langle 1| {}_j \langle m_j | \mathcal{L}_{a j}(v)
|1 \rangle_a | n_j \rangle_j
&=v \delta_{m_j, n_j} \label{Lop2}.
\end{align}
In figure~\ref{boson-L}, we depict the non-zero elements
of the $L$-operator.

\begin{figure}[tt]
\begin{center}
\includegraphics[width=0.7\textwidth]{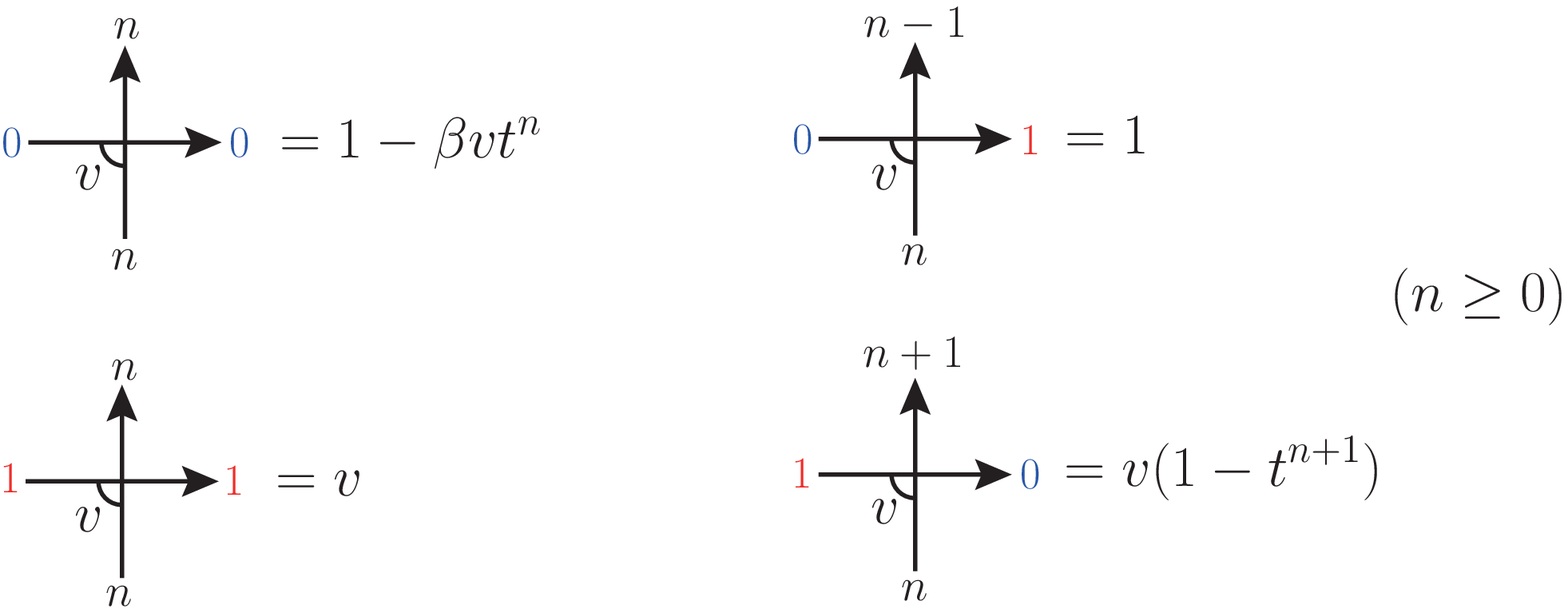}
\end{center}
\caption{The non-zero elements of the $L$-operator.
}
\label{boson-L}
\end{figure}

The $L$-operator satisfies the intertwining relation ($RLL$-relation)
\begin{align}
R_{ab}(u,v)\mathcal{L}_{a j}(u)\mathcal{L}_{b j}(v)=
\mathcal{L}_{b j}(v)\mathcal{L}_{a j}(u)R_{ab}(u,v),
\label{RLL-boson}
\end{align}
which acts on $W_a \otimes W_b \otimes \mathcal{F}_j$.
The $R$-matrix
\begin{align}
R(u,v)
=
\begin{pmatrix}
u-tv & 0 & 0 & 0 \\
0 & t(u-v) & (1-t)u & 0 \\
0 & (1-t)v & u-v & 0 \\
0 & 0 & 0 & u-tv
\end{pmatrix}, 
\label{Rmatrix}
\end{align}
is a solution to the Yang-Baxter equation:
\begin{align}
R_{ab}(u,v)R_{ac}(u,w)R_{bc}(v,w)=
R_{bc}(v,w)R_{ac}(u,w)R_{ab}(u,v).
\label{YBE}
\end{align}

From the $L$-operator, we construct the monodromy matrix
\begin{align}
\mathcal{T}_{a}(v)=\prod_{j=1}^{M} \mathcal{L}_{a j}(v),
\label{monodromy1}
\end{align}
which acts on $W_a \otimes (\mathcal{F}_1\otimes\dots\otimes 
\mathcal{F}_{M})$. Tracing
out the auxiliary space, one defines the transfer matrix 
$\tau(v)\in \mathrm{End} (\mathcal{F}^{\otimes M})$:
\begin{align}
\tau (v)=\Tr_{W_a} \mathcal{T}_{a}(v). 
\label{TM}
\end{align}

Simplifications happen for the $t$-deformed boson model
at $t=-1$.
Due to the matrix element \eqref{specialmatrixelement},
when we construct $N$-particle states from the vacuum,
each site can only be occupied by at most one boson.
This is equivalent to the reduction of the $t$-deformed boson model
to the six-vertex model whose $L$-operator is given by
\begin{align}
\mathcal{L}_{aj}(v)
=
\begin{pmatrix}
1-\beta v & 0 & 0 & 0 \\
0 & 1+\beta v & 2v & 0 \\
0 & 1 & v & 0 \\
0 & 0 & 0 & v
\end{pmatrix}_{aj}.
\label{sixvertexLoperator}
\end{align}
One can check that 
\eqref{sixvertexLoperator} indeed satisfies the intertwining relation
\eqref{RLL-boson}.
We analyze the structure of this six-vertex model in this paper.
We remark that at the other degenerated point $t=0$ of the $t$-deformed boson
model, the model is called the non-Hermitian phase model \cite{BN},
whose wavefunction \cite{MS} constructed from the $L$-operator
is given by the Grothendieck polynomials \cite{LS,FK,Bu,IN,KN},
which furthermore
reduces to the Schur polynomials by taking $\beta=0$ where the corresponding
model is the {\it Hermitian} phase model \cite{Bo,SU,KS,Wh}.
On the other hand, the wavefunction at $\beta=0$ for generic $t$
is given by the Hall-Littlewood polynomials \cite{Ts,Kor}.

\section{Scalar Products of state vectors of the six-vertex model}
In this section,
we first construct a state vector of the
integrable $t$-deformed boson model
following the standard procedure of the
quantum inverse scattering method
(i.e. the algebraic Bethe ansatz) which
is based on the Yang-Baxter algebra.
The $N$-particle state $|\Psi(\{ v \}_N) \rangle$
is characterized by $N$ unknown numbers
$v_j\in \mathbb{C}$ ($1 \le j \le N$),
and is not an eigenvector of the transfer matrix in general.
However, if the parameters $v_j$ satisfy a set of constraints
called the Bethe ansatz equation, the $N$-particle state
becomes an eigenstate.
We call the state $|\Psi(\{ v \}_N) \rangle$ on-shell state
if $\{ v \}$ satisfies the Bethe ansatz equation,
and off-shell state if no constraints are imposed on $\{ v \}$.
We then restrict the analysis to $t=-1$,
which is equivalent to considering the six-vertex model
\eqref{sixvertexLoperator},
and prove the determinant form for the
scalar products $\langle \Psi(\{u\}_N)|\Psi(\{v\}_N) \rangle$
which is the inner product between
the off-shell state $|\Psi(\{v\}_N) \rangle$
and the dual off-shell state $\langle \Psi(\{u\}_N)|$.

From the $L$-operator, we construct the monodromy matrix
\begin{align}
\mathcal{T}_{a}(v,\{w\})=\prod_{j=1}^{M} \mathcal{L}_{a j}(v/w_{M+1-j})
=
\begin{pmatrix}
\mathcal{A}(v,\{ w \}) & \mathcal{B}(v, \{ w \})  \\
\mathcal{C}(v,\{ w \}) & \mathcal{D}(v, \{ w \})
\end{pmatrix}_{a}, 
\label{monodromy}
\end{align}
which acts on $W_a \otimes (\mathcal{F}_1\otimes\dots\otimes 
\mathcal{F}_{M})$.  Here we introduced the inhomogeneous parameters 
$w_1,\dots,w_M \in \mathbb{C}$. See figure~\ref{abcdpic} for
the graphical description of the elements of the monodromy
matrix.
\begin{figure}[ttt]
\begin{center}
\includegraphics[width=0.9\textwidth]{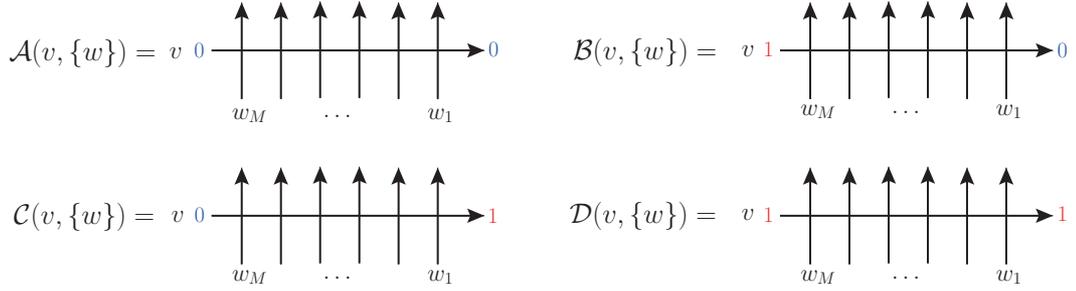}
\end{center}
\caption{The diagrammatic representation of 
the elements of the monodromy matrix \eqref{monodromy}
with inhomogeneous parameters $w_{M+1-j}$ associated
with site $j$.}
\label{abcdpic}
\end{figure}

Taking the homogeneous limit 
$w_j\to 1$ ($1\le j\le M$), \eqref{monodromy1} is recovered:
\begin{align}
\mathcal{T}_a(v,\{w\})|_{w_1=1,\dots,w_M=1}=\mathcal{T}_a(v).
\label{homogeneous}
\end{align}
As in the above equation, hereafter we will omit $\{w\}$ for 
the quantities in the homogeneous limit (e.g. 
$\mathcal{A}(v):=\mathcal{A}(v,\{w\})|_{w_1=1,\dots,w_M=1}$).
Tracing
out the auxiliary space, one defines the transfer matrix 
$\tau(v,\{w\})\in \mathrm{End} (\mathcal{F}^{\otimes M})$:
\begin{align}
\tau (v,\{w\})=\Tr_{W_a} \mathcal{T}_{a}(v,\{w\}). 
\label{TM2}
\end{align}

The repeated applications of the $RLL$-relation leads to the
intertwining relation
\begin{align}
R_{ab}(u,v)\mathcal{T}_{a}(u, \{ w \})\mathcal{T}_{b}(v, \{ w \})=
\mathcal{T}_{b}(v, \{ w \})\mathcal{T}_{a}(u, \{ w \})R_{a b}(u,v) \label{RTT}.
\end{align}
Some of the elements of the intertwining relation are
\begin{align}
&\mathcal{C}(u, \{ w \})\mathcal{B}(v, \{ w \})
-t \mathcal{B}(v, \{ w \})\mathcal{C}(u, \{ w \}) \nonumber \\
&=g(u,v)(\mathcal{A}(v, \{ w \})\mathcal{D}(u, \{ w \})
-\mathcal{A}(u, \{ w \})\mathcal{D}(v, \{ w \})), \\
&\mathcal{A}(u, \{ w \})\mathcal{B}(v, \{ w \})
=f(u,v)\mathcal{B}(v, \{ w \})\mathcal{A}(u, \{ w \})
+g(u,v)\mathcal{B}(u, \{ w \})\mathcal{A}(v, \{ w \}), \\
&\mathcal{D}(u, \{ w \})\mathcal{B}(v, \{ w \})
=f(v,u)\mathcal{B}(v, \{ w \})\mathcal{D}(u, \{ w \})
-g(u,v)\mathcal{B}(u, \{ w \})\mathcal{D}(u, \{ w \}), \\
&{[} \mathcal{B}(u,\{w\}),\mathcal{B}(v,\{w\}){]}
={[} \mathcal{C}(u,\{w\}),\mathcal{C}(v,\{w\}){]}=0,
\end{align}
where
\begin{align}
f(u,v)=\frac{ut-v}{u-v}, \quad  g(u,v)=\frac{(1-t)v}{u-v}.
\end{align}
The transfer matrix $\tau(v,\{w\})$ is then expressed as elements
of the monodromy matrix:
\begin{align}
\tau(v,\{w\})=\Tr_{W_a} \mathcal{T}_{a}(v,\{w\})
=\mathcal{A}(v,\{w\})+\mathcal{D}(v,\{w\}).
\label{transfer}
\end{align}

The arbitrary $N$-particle state $|\Psi(\{v \}_N) \ket$
(resp. its dual $\bra \Psi(\{v \}_N)|$) 
(not normalized) with $N$ spectral parameters
$\{ v \}_N=\{ v_1,\dots,v_N \}$
is constructed by a multiple action
of $\mathcal{B}$ (resp. $\mathcal{C}$) operator on the vacuum state 
$|\Omega \ket:= | 0^{M} \ket:=|0\ket_1\
\otimes \dots \otimes |0\ket_{M}$
(resp. $\bra \Omega| := \bra 0^{M}|:=
{}_1\bra 0|\otimes\dots \otimes{}_{M}\bra 0|$):
\begin{align}
|\Psi(\{v \}_N,\{ w \}) \ket=\prod_{j=1}^N \mathcal{B}(v_j,\{ w \})| \Omega \ket,
\quad
\bra \Psi(\{v \}_N,\{ w \})|=\bra \Omega| \prod_{j=1}^N
\mathcal{C}(v_j,\{ w \}).
\label{statevector-B}
\end{align}
By the standard procedure of the algebraic Bethe ansatz,
we have the followings.
\begin{proposition}\label{eigenstate}
The $N$-particle state $|\Psi(\{v \}_N,\{w\}) \ket$
and its dual $\langle \Psi(\{v \}_N,\{w\})|$ become
an eigenstate (on-shell states) of the transfer matrix 
\eqref{transfer} when the set of parameters $\{v\}_N$
satisfies the Bethe ansatz equation:
\begin{align}
\frac{a(v_j,\{w\})}{d(v_j,\{w\})}=-\prod_{k=1}^N\frac{f(v_k,v_j)}{f(v_j,v_k)}
\label{BAE},
\end{align}
where
\begin{align}
a(v,\{w\})=\prod_{j=1}^M \left(1-\frac{\beta v}{w_j} \right), \quad
d(v,\{w\})=\prod_{j=1}^M \frac{v}{w_j}.
\end{align}
Then the eigenvalue of the transfer matrix is given by
\begin{align}
\tau(v,\{w\})=a(v,\{w\})\prod_{j=1}^N f(v,v_j)
             +d(v,\{w\})\prod_{j=1}^N f(v_j,v).
\label{EV}
\end{align}
\end{proposition}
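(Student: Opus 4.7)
The plan is the textbook algebraic Bethe ansatz argument, adapted to the $L$-operator \eqref{Lop-boson} and its inhomogeneous monodromy \eqref{monodromy}. The two ingredients needed are (i) the action of the diagonal operators of $\mathcal{T}_a(v,\{w\})$ on the reference state $|\Omega\rangle$ and its dual, and (ii) the commutation relations already listed between \eqref{RTT} and \eqref{transfer}.

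First I would verify that $|\Omega\rangle$ is a pseudo-vacuum: $\mathcal{A}(v,\{w\})|\Omega\rangle = a(v,\{w\})|\Omega\rangle$, $\mathcal{D}(v,\{w\})|\Omega\rangle = d(v,\{w\})|\Omega\rangle$, and $\mathcal{C}(v,\{w\})|\Omega\rangle = 0$. From the explicit matrix elements \eqref{Lop1}--\eqref{Lop2}, each local $L$-factor acts on $|0\rangle_j$ via only its diagonal entries $1-\beta v/w_{M+1-j}$ and $v/w_{M+1-j}$, while $B_j$ annihilates $|0\rangle_j$; taking the product over $j=1,\ldots,M$ yields exactly $a(v,\{w\})$ and $d(v,\{w\})$. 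The dual statements $\langle \Omega|\mathcal{A}(v,\{w\}) = a(v,\{w\})\langle \Omega|$, $\langle \Omega|\mathcal{D}(v,\{w\}) = d(v,\{w\})\langle \Omega|$, and $\langle \Omega|\mathcal{B}(v,\{w\}) = 0$ follow from the analogous action on ${}_j\langle 0|$, using that $B_j^\dagger$ annihilates ${}_j\langle 0|$.

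Next I would iterate the $\mathcal{A}\mathcal{B}$ and $\mathcal{D}\mathcal{B}$ commutation relations to move $\mathcal{A}(v,\{w\})$ and $\mathcal{D}(v,\{w\})$ through the product $\prod_{j=1}^N \mathcal{B}(v_j,\{w\})$. At each step the $f$-term leaves the spectral parameter of the diagonal operator untouched, while the $g$-term exchanges it with that of the neighboring $\mathcal{B}$. Grouping the $2^N$ resulting terms according to which $\mathcal{B}$-parameter ultimately carries $v$ produces one \emph{wanted} contribution (with $v$ still attached to $\mathcal{A}$ or $\mathcal{D}$) and $N$ \emph{unwanted} ones (with $v$ exchanged into some $\mathcal{B}$-slot). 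Using $[\mathcal{B}(u,\{w\}),\mathcal{B}(v,\{w\})]=0$ to first permute $\mathcal{B}(v_j,\{w\})$ to the leading position makes the coefficient of each unwanted term symmetric in the remaining parameters, and the standard calculation collects these coefficients from $\mathcal{A}+\mathcal{D}$ into a quantity proportional to $a(v_j,\{w\})\prod_{k\neq j} f(v_j,v_k) + d(v_j,\{w\})\prod_{k\neq j} f(v_k,v_j)$. Setting each such coefficient to zero is precisely the Bethe ansatz equation \eqref{BAE}, and then the remaining wanted term yields the stated eigenvalue \eqref{EV}.

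The main obstacle is the bookkeeping of unwanted terms: systematically tracking the products of $f$'s generated along the way, and verifying that the signs appearing in the $\mathcal{D}\mathcal{B}$ commutation combine consistently with those from $\mathcal{A}\mathcal{B}$ when the two contributions are added. This is a textbook induction on $N$ but needs care. The dual statement for $\langle \Psi(\{v\}_N,\{w\})|$ then follows by applying the same argument to the opposite intertwining relation, with $\mathcal{C}$ in place of $\mathcal{B}$ and left/right roles swapped.
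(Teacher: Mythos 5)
Your proposal is correct and coincides with the paper's approach: the paper offers no written proof beyond the phrase ``by the standard procedure of the algebraic Bethe ansatz,'' and your argument (pseudo-vacuum eigenvalues $a$, $d$ with $\mathcal{C}|\Omega\rangle=0$, iteration of the $\mathcal{A}\mathcal{B}$ and $\mathcal{D}\mathcal{B}$ exchange relations, cancellation of unwanted terms via \eqref{BAE}) is exactly that standard procedure. The only caution is that the paper's displayed $\mathcal{D}\mathcal{B}$ relation contains a typo ($\mathcal{D}(u)$ where $\mathcal{D}(v)$ is meant in the $g$-term), which your bookkeeping of unwanted terms would need to correct.
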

For the case $t=-1$ which we investigate extensively in this paper,
the Bethe ansatz equation \eqref{BAE} is
\begin{align}
\prod_{k=1}^M \left( \frac{w_k}{v_j}-\beta \right)=(-1)^{N+1}, \ j=1,\dots,N.
\end{align}
This is the Bethe ansatz equation for free fermions
in the homogeneous limit $w_j\to 1$ ($1\le j \le N$).
In the analysis below, we do not impose these constraints between the
spectral parameters $\{ v \}$ and inhomogeneous parameters $\{ w \}$.
We remark that from this consideration on the Bethe ansatz equation,
one can imagine the wavefunction is given by the Schur polynomials.
The amazing thing is that a detailed analysis of its realization
as partition functions lead us to a new combinatorial formula
for the Schur polynomials itself.

The scalar product between the arbitrary off-shell state vectors, 
which is mainly considered in this section, is defined as
\begin{align}
\langle \Psi(\{ u \}_N,\{w\})| \Psi(\{ v \}_N,\{w\}) \rangle 
&=\langle \Omega| \prod_{j=1}^N \mathcal{C}(u_j,\{w\})
\prod_{k=1}^N \mathcal{B}(v_k,\{w\})| \Omega \rangle
\label{SP}
\end{align}
with $u_j, v_k\in \mathbb{C}$.

From now on, we specialize the parameter $t$
of the $t$-deformed boson algebra
to $t=-1$.
This is equivalent to considering the
six-vertex model \eqref{sixvertexLoperator}.
The following
determinant formula
in the homogeneous limit
$w_j\to 1$ ($1\le j \le N$) is valid.

\begin{theorem}{\label{scalarthm}}
The scalar product \eqref{SP} in the homogeneous limit 
$w_j\to1$ ($1\le j \le N$) is given by a determinant form:
\begin{align}
\bra \Psi(\{ u \}_N)| \Psi(\{ v \}_N) \ket
=\prod_{j=1}^N (2v_j)
\prod_{1 \le j<k \le N }
\frac{(u_j+u_k)(v_j+v_k)}{(u_j-u_k)(v_k-v_j)}
\mathrm{det}_N Q(\{ u \}_N|\{ v \}_N),
\label{generalscalar}
\end{align}
where $\{u\}_N$ and $\{v\}_N$ are arbitrary sets
of complex values (i.e. off-shell conditions), and
$Q$ 
is an $N \times N$ matrix with matrix elements
\begin{align}
Q(\{ u \}_N|\{ v \}_N)_{jk}=\frac{a(u_j)d(v_k)-d(u_j)a(v_k)}{v_k-u_j}
.
\label{element}
\end{align}
\end{theorem}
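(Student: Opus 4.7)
The plan is to establish the determinant formula by induction on $N$, exploiting the dramatic simplifications that occur at the free-fermion point $t=-1$. The key structural input is that at this value the first commutation relation extracted from the $RTT$-algebra becomes an anticommutator, $\mathcal{C}(u)\mathcal{B}(v)+\mathcal{B}(v)\mathcal{C}(u)=g(u,v)\bigl(\mathcal{A}(v)\mathcal{D}(u)-\mathcal{A}(u)\mathcal{D}(v)\bigr)$, which is precisely what makes a single-determinant formula available for two off-shell states (at generic $t$ one must impose the Bethe ansatz equation on one side, as in Slavnov's theorem). The base case $N=1$ will be a direct computation of $\langle\Omega|\mathcal{C}(u)\mathcal{B}(v)|\Omega\rangle$ from the explicit $L$-operator \eqref{sixvertexLoperator}.

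The main algebraic step is to commute $\mathcal{C}(u_1)$ rightward through $\prod_{k=1}^N \mathcal{B}(v_k)$ using only this anticommutator. Because $\mathcal{C}(u_1)|\Omega\rangle=0$, every surviving term must "absorb" $\mathcal{C}(u_1)$ at some position $k$, inserting $g(u_1,v_k)\bigl(\mathcal{A}(v_k)\mathcal{D}(u_1)-\mathcal{A}(u_1)\mathcal{D}(v_k)\bigr)$ in place of $\mathcal{B}(v_k)$. The resulting $\mathcal{A}$ and $\mathcal{D}$ are then pushed back to the vacuum through the remaining $\mathcal{B}(v_\ell)$'s via the $AB$- and $DB$-intertwiners, picking up the eigenvalues $a$ and $d$ on $|\Omega\rangle$. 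After keeping track of which of the two terms in each $AB$/$DB$ move contributes at each stage, the upshot is a recursion of the form
\[
\langle\Psi(\{u\}_N)|\Psi(\{v\}_N)\rangle=\sum_{k=1}^N F_k\,\langle\Psi(\{u\}_N\setminus u_1)|\Psi(\{v\}_N\setminus v_k)\rangle,
\]
with $F_k$ assembled from $a(u_1)d(v_k)-d(u_1)a(v_k)$, $g(u_1,v_k)$, and the $f$-factors $f(v_\ell,v_k)$ for $\ell\neq k$. Expanding $\mathrm{det}_N Q$ along its first row yields $\sum_k(-1)^{k+1}Q_{1k}\,\mathrm{det}_{N-1}Q(\{u\}_N\setminus u_1|\{v\}_N\setminus v_k)$; since $Q_{1k}=(a(u_1)d(v_k)-d(u_1)a(v_k))/(v_k-u_1)$, one need only verify that the telescoping of the prefactor $\prod_{1\le j<k\le N}(u_j+u_k)(v_j+v_k)/((u_j-u_k)(v_k-v_j))$ under deletion of $u_1$ and $v_k$ reproduces the remaining $f$-products and signs in $F_k$.

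The main obstacle I expect is the combinatorial/sign bookkeeping in the first step: controlling which branch of each $AB$ or $DB$ move survives when $\mathcal{A}(u_1),\mathcal{D}(u_1),\mathcal{A}(v_k),\mathcal{D}(v_k)$ are pulled through $\prod_{\ell\neq k}\mathcal{B}(v_\ell)$, and reassembling the outcome into the clean recursion above. A possibly cleaner alternative—and the one I would fall back on if the recursion proves unwieldy—is an Izergin–Korepin-type characterization: one checks that both sides, viewed as functions of $u_1$ with the other variables held generic, are rational of a fixed degree after clearing the denominators $\prod_k(v_k-u_1)$, share common residues at the potential poles $u_1=v_k$, and agree at a single easily computable asymptotic value; polynomial interpolation then identifies the two expressions without requiring the full inductive recursion to be executed by hand.
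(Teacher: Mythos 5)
Your primary route has a concrete gap at its center. If the claimed recursion
$\langle\Psi(\{u\}_N)|\Psi(\{v\}_N)\rangle=\sum_{k}F_k\,\langle\Psi(\{u\}_N\setminus u_1)|\Psi(\{v\}_N\setminus v_k)\rangle$
were produced by commuting $\mathcal{C}(u_1)$ through the $\mathcal{B}(v_\ell)$'s alone, then $F_k$ could depend only on $u_1$ and the $v$'s. But comparing with the Laplace expansion of $\mathrm{det}_N Q$ along its first row, the required coefficient is $(-1)^{1+k}Q_{1k}$ times the ratio of prefactors, and that ratio contains $\prod_{j=2}^{N}\frac{u_1+u_j}{u_1-u_j}$ --- i.e.\ $F_k$ \emph{must} depend on $u_2,\dots,u_N$. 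The missing dependence can only come from the ``unwanted'' terms you propose to bookkeep away: when $\mathcal{A}(v_k)\mathcal{D}(u_1)-\mathcal{A}(u_1)\mathcal{D}(v_k)$ is pushed to the vacuum, the $g$-branches of the $AB$/$DB$ relations create states $\prod\mathcal{B}(\cdot)|\Omega\rangle$ whose argument sets contain $u_1$ or exclude two of the $v$'s, and these survive to interact with $\mathcal{C}(u_2),\dots,\mathcal{C}(u_N)$. So the clean recursion is not a consequence of the first commutation step; the full Korepin-type sum must be carried and resummed, and that resummation is precisely the unproven heart of the argument.

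The fallback is also underdetermined as stated. Since $a(u)=(1-\beta u)^M$ and $d(u)=u^M$, the entry $Q_{1k}=\frac{a(u_1)d(v_k)-d(u_1)a(v_k)}{v_k-u_1}$ is a divided difference, hence a \emph{polynomial} in $u_1$ of degree $M-1$: the points $u_1=v_k$ are not poles of the right-hand side, so ``matching residues'' there carries no information, and the scalar product is a polynomial of degree about $M-1$ in $u_1$, which cannot be fixed by $N$ evaluation points plus one asymptotic value. This is exactly why the paper does not run Izergin--Korepin in the homogeneous spectral variables: it introduces inhomogeneities $w_1,\dots,w_M$, defines intermediate scalar products $S(\{u\}_n|\{v\}_N|\{w\})$ interpolating between the domain-wall partition function ($n=0$) and the scalar product ($n=N$), and characterizes them by symmetry in $w_1,\dots,w_{M-N+n}$, a degree count in $u_n$, a freezing recursion at $u_n=\beta^{-1}w_{M-N+n}$, and the $n=0$ initial condition --- the $M-N+n$ special points $\beta^{-1}w_j$ supply exactly the interpolation data that the homogeneous setting lacks. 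To repair your argument you would either need to execute the full resummation of the commutation-relation expansion (e.g.\ via a free-fermion/Laplace-expansion identity for determinants of sums of matrices), or reinstate inhomogeneous parameters and recurse in the number of $\mathcal{C}$-operators as the paper does.
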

Here we will show the above determinant formula by
using a method initiated by Izergin-Korepin \cite{Kore,Ize}
for the domain wall boundary partition function
and recently developed by Wheeler \cite{Wheeler2}
in the calculation of the scalar product of the
spin-1/2 XXZ chain.
This procedure was applied to a family of integrable five-vertex model
\cite{MSvertex} which, in contrast to the spin-1/2 XXZ chain,
there was no need to impose the Bethe ansatz equation
(i.e. on-shell condition) to show the determinant formula.
For the six-vertex model we consider in this paper,
we also do not have to impose the Bethe ansatz equation, i.e.,
the determinant formula \eqref{generalscalar} is valid for arbitrary 
off-shell states.

First let us introduce the
following intermediate scalar products
which plays the key role for the proof
\begin{align}
S(\{ u \}_n | \{ v \}_N| \{ w \})=
\langle 
1^{N-n} 0^{M-N+n}| \prod_{j=1}^n \mathcal{C}(u_j,\{w\})
                                  \prod_{k=1}^N \mathcal{B}(v_k,\{w\}) 
 |\Omega 
\rangle. 
\label{intermediatedef}
\end{align}
See also figure~\ref{intermediatepic}.
The term ``intermediate" stems from the fact that \eqref{intermediatedef}
interpolates the scalar product ($n=N$) \eqref{SP} and the domain wall boundary 
partition function ($n=0$) \eqref{domainwallboundary} 
 (see also figure~\ref{initialpic}).
\begin{figure}[tt]
\begin{center}
\includegraphics[width=0.6\textwidth]{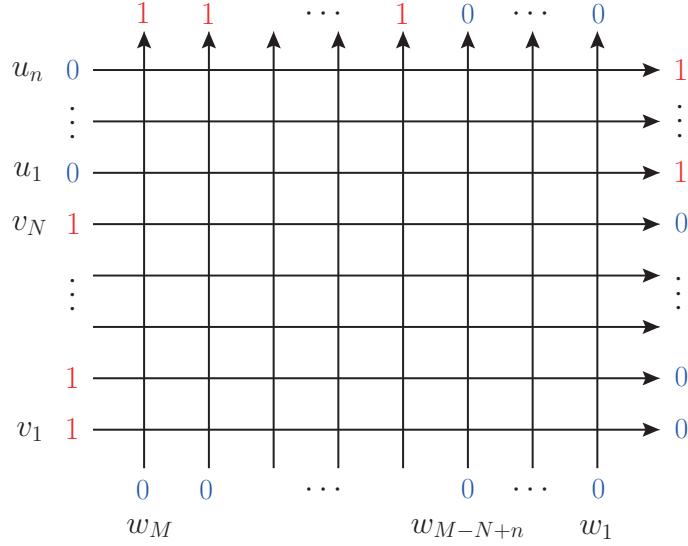}
\end{center}
\caption{The graphical representation of the intermediate scalar products 
\eqref{intermediatedef}
with inhomogeneous parameters $w_{M+1-j}$ associated
with site $j$.}
\label{intermediatepic}
\end{figure}
We have the following lemma regarding the properties of the intermediate 
scalar product.
\begin{lemma}{\label{property}}
The intermediate scalar product \eqref{intermediatedef} 
$S(\{ u \}_n | \{ v \}_N| \{ w \})$ satisfies the
following properties. 
\begin{enumerate}
\item $S(\{ u \}_n | \{ v \}_N| \{ w \})$ is symmetric with respect to
the variables $\{ w_1,\dots,w_{M-N+n} \}$. 
\item $\prod_{j=M-N+n+1}^M (1+\beta u_n/w_j)^{-1}
S(\{ u \}_n | \{ v \}_N| \{ w \})$ 
is a polynomial of degree $M-N+n-1$ in $u_n$. 
\item
The following recursive relations between the intermediate scalar products
hold
\begin{align}
&S(\{ u \}_n|\{ v \}_N|\{ w \})|_{u_n=\beta^{-1} w_{M-N+n}} \nn \\
& \qquad \qquad 
=\prod_{j=1}^{M-N+n-1} \frac{w_{M-N+n}}{\beta w_j}
\prod_{j=M-N+n+1}^M  \Bigg(1+\frac{w_{M-N+n}}{w_j}\Bigg)
S(\{ u \}_{n-1}|\{ v \}_N|\{ w \}). \label{recursiverelation}
\end{align}
\item  The case $n=0$ of the intermediate scalar products has the 
following form:
\begin{align}
S(\{ u \}_0|\{ v \}_N|\{ w \})
=\prod_{j=1}^N \frac{2v_j}{w_{M-N+j}^{j}}
\prod_{1 \le j<k \le N}(v_j+v_k)
\prod_{j=1}^N \prod_{k=1}^{M-N} \Bigg(1-\frac{\beta v_j}{w_k} \Bigg).
\label{initial}
\end{align}
\end{enumerate}
\end{lemma}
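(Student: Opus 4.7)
The plan is to verify the four properties of $S(\{u\}_n|\{v\}_N|\{w\})$ in turn by a lattice-path analysis of its graphical representation (Figure~\ref{intermediatepic}), following the Izergin--Korepin technique as adapted by Wheeler \cite{Wheeler2}. These four properties jointly characterize $S$ uniquely, and in the subsequent proof of Theorem~\ref{scalarthm} one verifies that the claimed determinant also satisfies them: property~2 bounds the polynomial degree in $u_n$, property~1 (symmetry in the inhomogeneities at the ``0-boundary'' columns) allows one to apply the single recursion of property~3 at $M-N+n$ distinct specializations of $u_n$, and property~4 anchors the induction on $n$.

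Property~1 follows from a column-type intertwining relation: transposing two adjacent inhomogeneities $w_j$ and $w_{j+1}$ at columns with vertical boundary state $0$ at both top (from the bra $\langle 1^{N-n}0^{M-N+n}|$) and bottom (from the vacuum $|\Omega\rangle$) is effected by inserting the $R$-matrix \eqref{Rmatrix} on the two corresponding quantum spaces, pushing it through the stack of $L$-operators via the column analogue of \eqref{RLL-boson}, and then using the fact that the $R$-matrix evaluated between $(0,0)$ and $(0,0)$ states produces a scalar which is symmetric in $w_j,w_{j+1}$ and cancels on both sides. For property~2, the entries of $\mathcal{L}_{aj}(u_n/w_{M+1-j})$ in \eqref{Lop1}--\eqref{Lop2} are polynomials of degree $\le 1$ in $u_n$ apart from the particle-swap weight $1$ of the $(0,1)$-$(1,0)$ vertex, so $\mathcal{C}(u_n,\{w\})$ has matrix elements of degree $\le M-1$ in $u_n$. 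A configuration analysis, using particle conservation together with the top boundary condition, then shows that the $N-n$ columns $1,\ldots,N-n$ each carry an explicit factor $(1+\beta u_n/w_{M+1-j})$ from the forced $(0,1)$-$(0,1)$ vertex on the $u_n$-line at that column; dividing out these factors leaves a polynomial of degree $(M-1)-(N-n)=M-N+n-1$.

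Property~3 is the most delicate step and the main obstacle. At $u_n=\beta^{-1}w_{M-N+n}$, the weight $1-\beta u_n/w_{M-N+n}$ of the $(0,0)$-$(0,0)$ vertex at column $N-n+1$ vanishes, and I would then invoke the standard Izergin--Korepin freezing argument to propagate this vanishing through the ice-rule: the $u_n$-line on columns $M,M-1,\ldots,N-n+1$ collapses into a unique configuration whose cumulative weight produces the prefactor $\prod_{j=1}^{M-N+n-1}\tfrac{w_{M-N+n}}{\beta w_j}$, while the vertices on columns $N-n,\ldots,1$ combine to give $\prod_{j=M-N+n+1}^M(1+w_{M-N+n}/w_j)$, and the remainder of the lattice above the $u_n$-line is precisely $S(\{u\}_{n-1}|\{v\}_N|\{w\})$. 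The difficulty here is the bookkeeping of the ice-rule cascade and the identification of which vertex is forced at each frozen column; a slip at any one column would alter the scalar prefactor.

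Property~4 follows by direct evaluation at $n=0$: with no $\mathcal{C}$-lines and vanishing vertical boundary states at both top and bottom of columns $N+1,\ldots,M$, the ice-rule freezes those columns to vertical state $0$ throughout, extracting the product $\prod_{j=1}^N\prod_{k=1}^{M-N}(1-\beta v_j/w_k)$; the remaining $N\times N$ block is a domain-wall partition function for \eqref{sixvertexLoperator}, which at $t=-1$ simplifies via a standard Izergin--Korepin-type calculation to the explicit prefactor $\prod_{j=1}^N\tfrac{2v_j}{w_{M-N+j}^{\,j}}\prod_{j<k}(v_j+v_k)$ in \eqref{initial}.
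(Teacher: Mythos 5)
Your overall strategy coincides with the paper's: all four properties are established by the same combination of a column-exchange symmetry (Property 1), an expansion over intermediate states that isolates the $u_n$-dependence (Property 2), a specialization/freezing argument for the recursion (Property 3), and the reduction of the $n=0$ case to a frozen strip times a domain-wall partition function evaluated by Izergin--Korepin (Property 4). There is, however, one step that fails as written. For Property 1 you propose to exchange $w_j$ and $w_{j+1}$ by inserting the $R$-matrix \eqref{Rmatrix} on the two quantum spaces and pushing it through the column of $L$-operators. But \eqref{Rmatrix} is the intertwiner for two $L$-operators sharing a quantum space and carrying different \emph{auxiliary} spaces (relation \eqref{RLL-boson}); it does not intertwine two $L$-operators sharing an auxiliary space and carrying different \emph{quantum} spaces, because the $L$-operator \eqref{sixvertexLoperator} is not symmetric under exchanging its two legs. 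The correct column intertwiner is a genuinely different matrix, the $\widetilde{R}$ of \eqref{Rtildematrix}, which has to be computed separately and satisfies the transposed relation \eqref{RLL2}. Once $\widetilde{R}$ is in hand your argument goes through exactly as you describe (its $(00,00)$ entry is $1$, so it acts trivially on the $|00\rangle$ boundary states of the columns being permuted), but with \eqref{Rmatrix} in its place the exchange relation you need simply does not hold.

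Separately, for Property 3 you need not fight the ice-rule cascade you flag as the main difficulty: the completeness-relation expansion you already invoke for Property 2 gives, for the $k$-th intermediate state, the explicit $u_n$-dependent factor $\prod_{j=M-N+n+1}^{M}(1+\beta u_n/w_j)\prod_{j=k+1}^{M-N+n}(1-\beta u_n/w_j)\prod_{j=1}^{k-1}(u_n/w_j)$. At $u_n=\beta^{-1}w_{M-N+n}$ every term with $k\le M-N+n-1$ is killed by the factor $(1-\beta u_n/w_{M-N+n})$, and the single surviving term $k=M-N+n$ has intermediate state $|1^{N-n+1}0^{M-N+n-1}\rangle$ and prefactor exactly the right-hand side of \eqref{recursiverelation}. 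This is how the paper derives the recursion (the frozen-row picture is mentioned only as an alternative), and it replaces the column-by-column bookkeeping you are worried about with one line of algebra. Your treatments of Properties 2 and 4 match the paper's in substance.
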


\begin{proof}
Property 1 follows from the $RLL$-relation
\begin{align}
&\widetilde{R}_{jk}(w_{M+1-j}/w_{M+1-k})
L_{a k}(u/w_{M+1-k})L_{a j}(u/w_{M+1-j}) \nonumber \\
&\qquad\qquad =L_{a j}(u/w_{M+1-j})L_{a k}(u/w_{M+1-k})
\widetilde{R}_{jk}(w_{M+1-j}/w_{M+1-k})
\label{RLL2}
\end{align}
holding in 
$\End(W_{a} \otimes V_j \otimes V_k$). Here $\widetilde{R}$
is given by
\begin{align}
\widetilde{R}(v)
=
\begin{pmatrix}
1 & 0 & 0 & 0 \\
0 & \beta(v-1) & v & 0 \\
0 & 1 & 0 & 0 \\
0 & 0 & 0 & v
\end{pmatrix},
 \label{Rtildematrix}
\end{align}
which intertwines the $L$-operators acting on a common auxiliary space
(but acting on  different quantum spaces).
Note the usual $RLL$-relation \eqref{RLL2} intertwines
the $L$-operators acting on a same quantum space but acting on different auxiliary spaces.
The above $RLL$-relation \eqref{RLL2} allows one to construct
the monodromy matrix as a product of the $L$-operators
acting on the same quantum space (see also  Appendix
for an example of using its property
to examine the symmetries of the domain wall boundary partition function),
and rewriting the intermediate scalar products in terms of the
resultant monodromy matrices makes one see Property 1 holds.

Property 2 can be shown by inserting the completeness relation
into the intermediate scalar products 
\begin{align}
S(\{ u \}_n | \{ v \}_N| \{ w \})&=\langle 
1^{N-n} 0^{M-N+n}|
\prod_{j=1}^n \mathcal{C}(u_j,\{w\}) \prod_{k=1}^N \mathcal{B}(v_k,\{w\})|\Omega \rangle
\nonumber \\
&=\sum_{k=1}^{M-N+n}
\langle 
1^{N-n} 0^{M-N+n}|\mathcal{C}(u_n,\{w\})
|1^{N-n} 0^{M-N+n-k} 1 0^{k-1} 
\rangle
\nonumber \\
&\qquad\times
\langle 1^{N-n} 0^{M-N+n-k} 1 0^{k-1}|
\prod_{j=1}^{n-1}\mathcal{C}(u_{j},\{w\}) \prod_{k=1}^N \mathcal{B}(v_k,\{w\})|\Omega \rangle,
\label{genericrecursiverelation}
\end{align}
and noting that the factor containing $u_n$ is calculated as
\begin{align}
&\langle 
1^{N-n }0^{M-N+n}|
\mathcal{C}(u_n,\{w\})
| 1^{N-n} 0^{M-N+n-k} 1 0^{k-1} \rangle \nn \\
&\qquad \qquad =\prod_{j=M-N+n+1}^M \Bigg(1+\frac{\beta u_n}{w_j} \Bigg)
\prod_{j=k+1}^{M-N+n} \Bigg(1-\frac{\beta u_n}{w_j} \Bigg)
\prod_{j=1}^{k-1} \frac{u_n}{w_j}.
\end{align}

Property 3 can be obtained by setting $u_n=\beta^{-1} w_{M-N+n}$
in \eqref{genericrecursiverelation},
or can be directly observed by its graphical representation
that the top row is completely frozen.

To show Property 4, we first note by
the graphical representation (see figure~\ref{initialpic}) that
\begin{align}
S(\{ u \}_0|\{ v \}_N|\{ w \})
=
\prod_{j=1}^N \prod_{k=1}^{M-N} \Bigg(1-\frac{\beta v_j}{w_k} \Bigg)
\langle 1^N|\prod_{j=1}^N \mathcal{B}_N(v_j)|0^N \rangle. 
\label{simplification}
\end{align}
where 
\begin{align}
\mathcal{B}_N(v_k)={}_a \langle 0| \prod_{j=1}^N 
\mathcal{L}_{aj}(v_k/w_{M+1-j}) |1 \rangle_a.
\label{B-domain}
\end{align}
One can evaluate the domain wall boundary partition function
$\langle 1^N|\prod_{j=1}^N \mathcal{B}_N(v_j)|0^N \rangle$
by the standard procedure following the arguments of Izergin-Korepin
\cite{Kore,Ize},
which is given in Appendix.
\begin{figure}[ttt]
\begin{center}
\includegraphics[width=0.99\textwidth]{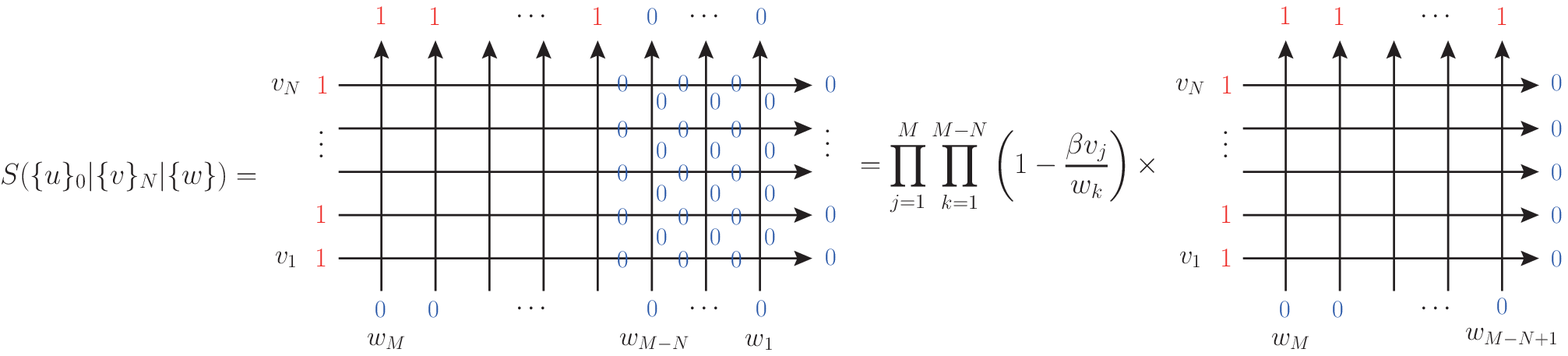}
\end{center}
\caption{The intermediate scalar products \eqref{initial} for $n=0$,
which corresponds to the domain wall boundary partition function.}
\label{initialpic}
\end{figure}
The result has the following simple factorized form
\begin{align}
\langle 1^N|\prod_{j=1}^N \mathcal{B}_N(v_j)|0^N \rangle
=\prod_{j=1}^N \frac{2v_j}{w_{M-N+j}^{j}}
\prod_{1 \le j<k \le N}(v_j+v_k), \label{domainwallboundary}
\end{align}
which together with \eqref{simplification} proves Property 4.
\end{proof}

\begin{lemma}\label{uniqueness}
The properties in Lemma~\ref{property} uniquely determine
the intermediate scalar product \eqref{intermediatedef}.
\end{lemma}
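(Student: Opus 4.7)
The plan is to proceed by induction on $n$. The base case $n=0$ is supplied directly by Property 4 of Lemma~\ref{property}, which provides a closed-form expression for $S(\{u\}_0|\{v\}_N|\{w\})$. For the inductive step, I would assume that $S(\{u\}_{n-1}|\{v\}_N|\{w\})$ is uniquely determined by the four properties, fix every variable other than $u_n$, and treat $S(\{u\}_n|\{v\}_N|\{w\})$ as a function of the single variable $u_n$. The goal is then to pin this function down via Lagrange interpolation, in the spirit of the Izergin--Korepin argument used for the domain wall boundary partition function.

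Concretely, Property 2 asserts that
\[
P(u_n) := \prod_{j=M-N+n+1}^M \Bigl(1+\frac{\beta u_n}{w_j}\Bigr)^{-1} S(\{u\}_n|\{v\}_N|\{w\})
\]
is a polynomial of degree $M-N+n-1$ in $u_n$. Property 3 evaluates $S$ (and hence $P$) explicitly at $u_n=\beta^{-1}w_{M-N+n}$ in terms of $S(\{u\}_{n-1}|\{v\}_N|\{w\})$, which by the inductive hypothesis is already known. Property 1 (symmetry of $S$ in $w_1,\ldots,w_{M-N+n}$) then allows one to swap $w_i\leftrightarrow w_{M-N+n}$ without altering $S$; applying Property 3 to the relabeled configuration supplies the value of $S$ at $u_n=\beta^{-1}w_i$ for every $i=1,\ldots,M-N+n$. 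These $M-N+n$ evaluation points are generically distinct and exceed the degree of $P$ by one, so Lagrange interpolation uniquely reconstructs $P(u_n)$, and hence $S(\{u\}_n|\{v\}_N|\{w\})$, from inductively known data.

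The point that most needs care, and which I expect to be the main obstacle, is the compatibility of the three properties used in the interpolation step. The prefactor in Property 2 involves only $w_{M-N+n+1},\ldots,w_M$, whereas Property 1 permutes $w_1,\ldots,w_{M-N+n}$; these two subsets of inhomogeneities are disjoint, so stripping off the prefactor does not interfere with the relabeling that generates the extra interpolation points. Moreover, since $[\mathcal{C}(u),\mathcal{C}(v)]=0$, the intermediate scalar product is symmetric in $\{u_j\}_n$, so singling out $u_n$ rather than any other $u_i$ in the argument is harmless. Once these compatibilities are in hand the induction closes, and uniqueness of $S(\{u\}_n|\{v\}_N|\{w\})$ follows.
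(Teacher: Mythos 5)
Your argument is correct and is precisely the standard Izergin--Korepin/Wheeler uniqueness argument that the paper invokes (it states Lemma~\ref{uniqueness} without writing out a proof, referring instead to the method of \cite{Kore,Ize,Wheeler2}): induction on $n$ with base case from Property~4, and at each step Lagrange interpolation of the degree-$(M-N+n-1)$ polynomial of Property~2 at the $M-N+n$ points $u_n=\beta^{-1}w_i$ obtained by combining the recursion of Property~3 with the $w$-symmetry of Property~1. Your observations on the compatibility of the prefactor with the permuted inhomogeneities and on the symmetry in the $u$'s are exactly the right points to check, so the proposal can stand as the proof.
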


Due to Lemma~\ref{uniqueness}, the following determinant
representation for the intermediate scalar product is valid.
\begin{theorem}{\label{determinantthm}}
The intermediate scalar product $S(\{ u \}_n|\{ v \}_N|\{ w \})$ \eqref{intermediatedef} 
has the following determinant form:
\begin{align}
S(\{ u \}_n|\{ v \}_N|\{ w \}) 
=&
\prod_{j=1}^N (2v_j)
\prod_{1 \le j<k \le N} \frac{v_j+v_k}{v_k-v_j}
\prod_{M-N+n+1 \le j<k \le M} \frac{1}{\beta(1-w_k/w_j)}
\nn \\
&\times
\prod_{1 \le j<k \le n} \frac{u_j+u_k}{u_j-u_k}
\mathrm{det}_N Q(\{ u \}_n|\{ v \}_N|\{ w \})
\label{intermediatedeterminant}
\end{align}
with an $N \times N$ matrix
$Q(\{ u \}_n|\{ v \}_N|\{ w \})$ whose matrix elements
are given by
\begin{align}
&Q(\{ u \}_n|\{ v \}_N|\{ w \})_{jk} \nn \\
& =
\begin{cases}
\displaystyle \prod_{l=M-N+n+1}^M \frac{\displaystyle \beta u_j+w_l}
{\displaystyle \beta u_j-w_l}
\frac{\displaystyle
a(u_j,\{w\})d(v_k,\{w\})
-a(v_k,\{w\})d(u_j,\{w\})
}
{\displaystyle
v_k-u_j
},  & \text{ ($1\le j \le n$)}  \\
\displaystyle
\frac{1}{w_{M-N+j}} \prod_{\substack{l=1 \\ l \neq M-N+j}}^M
\left( 1-\frac{\beta v_k}{w_l} \right),
& \text{ ($n+1\le j \le N$)}
\end{cases}.
\end{align}
\end{theorem}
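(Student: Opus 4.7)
The plan is to invoke Lemma~\ref{uniqueness} and verify that the right-hand side of \eqref{intermediatedeterminant} enjoys the four characterizing properties of Lemma~\ref{property}. Let $\widetilde S(\{u\}_n|\{v\}_N|\{w\})$ denote the expression on the right-hand side; then showing that $\widetilde S$ satisfies Properties~1--4 forces $\widetilde S = S$ by uniqueness, proving the theorem.

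For Property~1, I would observe that the prefactor $\prod_{M-N+n+1\le j<k\le M}(\beta(1-w_k/w_j))^{-1}$ involves only the variables $w_{M-N+n+1},\dots,w_M$ and is therefore trivially symmetric in $\{w_1,\dots,w_{M-N+n}\}$. For the determinant, each entry in a row of the first type ($1\le j\le n$) depends on $\{w_1,\dots,w_{M-N+n}\}$ only through the symmetric functions $a(u_j,\{w\})$, $d(u_j,\{w\})$, $a(v_k,\{w\})$, $d(v_k,\{w\})$, while each entry in a row of the second type ($n+1\le j\le N$) depends on those variables only through the symmetric product $\prod_{l=1}^{M-N+n}(1-\beta v_k/w_l)$ (after pulling out the distinguished factor $w_{M-N+j}$, which lies outside the symmetric range). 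Thus the determinant itself is symmetric in $\{w_1,\dots,w_{M-N+n}\}$.

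For Property~2, the variable $u_n$ appears only in the $n$-th row of $Q$. After multiplication by $\prod_{l=M-N+n+1}^M(1+\beta u_n/w_l)^{-1}$, I would rewrite the relevant factor as
\begin{equation*}
\prod_{l=M-N+n+1}^M\frac{\beta u_n+w_l}{\beta u_n-w_l}\cdot(1+\beta u_n/w_l)^{-1}=\prod_{l=M-N+n+1}^M\frac{1}{\beta u_n/w_l-1},
\end{equation*}
so the $n$-th row becomes $[a(u_n,\{w\})d(v_k,\{w\})-a(v_k,\{w\})d(u_n,\{w\})]/[(v_k-u_n)\prod_{l>M-N+n}(\beta u_n/w_l-1)]$. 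Using $a(u_n,\{w\})=\prod_{l=1}^M(1-\beta u_n/w_l)$ cancels the apparent poles at $u_n=w_l/\beta$ (for $l>M-N+n$) in the first term, and the residues of the second term at those points are column-independent (only $a(v_k,\{w\})$ remains as the $k$-dependence is absorbed into a common multiple of the row), producing a row-rank-one contribution whose determinant vanishes. A degree count then gives $\deg_{u_n}=M-N+n-1$ as required.

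For Property~3, I would set $u_n=\beta^{-1}w_{M-N+n}$ in the $n$-th row. The factor $a(u_n,\{w\})$ acquires the zero $1-\beta u_n/w_{M-N+n}=0$, and the crucial simplification is that the factor $\prod_{l=M-N+n+1}^M(\beta u_n+w_l)/(\beta u_n-w_l)$ together with the remaining terms reduces the $n$-th row to a form that, upon Laplace expansion, leaves exactly the determinant for $\widetilde S(\{u\}_{n-1}|\{v\}_N|\{w\})$ up to the explicit product of $w$-factors in \eqref{recursiverelation}. I expect this to be the main obstacle: the bookkeeping of the $w$-prefactors on both sides of \eqref{intermediatedeterminant} after specialization (the products $\prod_{M-N+n+1\le j<k\le M}$ versus $\prod_{M-N+n\le j<k\le M}$, and the factor $\prod_{1\le j<k\le n}(u_j+u_k)/(u_j-u_k)$ collapsing to $\prod_{1\le j<k\le n-1}$ times residual terms in $u_j+\beta^{-1}w_{M-N+n}$) must be matched against the right-hand side of \eqref{recursiverelation} precisely; a careful factorization of the frozen row is necessary.

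Finally, Property~4 is a direct computation: at $n=0$ only rows of the second type appear, and after pulling out $\prod_{l=1}^{M-N}(1-\beta v_k/w_l)$ from each column one is left with a Cauchy-type determinant in the variables $z_j:=w_{M-N+j}$ and $\beta v_k$, which evaluates in closed form and, combined with the prefactors, reproduces \eqref{initial}. Having verified all four properties, Lemma~\ref{uniqueness} gives $\widetilde S=S$ and establishes \eqref{intermediatedeterminant}.
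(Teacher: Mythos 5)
Your strategy coincides with the paper's: the authors prove Theorem~\ref{determinantthm} in exactly this way, by checking that the right-hand side of \eqref{intermediatedeterminant} satisfies the four properties of Lemma~\ref{property} and then invoking Lemma~\ref{uniqueness}, singling out the Cauchy determinant formula for Property~4 just as you do. Your sketch is in fact more detailed than the published proof, and the bookkeeping you flag as the main obstacle in Property~3 does close up: the prefactor $\prod_{1\le j<k\le n}(u_j+u_k)/(u_j-u_k)$ evaluated at $u_n=\beta^{-1}w_{M-N+n}$ supplies precisely the extra factors $\prod_{j<n}(\beta u_j+w_{M-N+n})/(\beta u_j-w_{M-N+n})$ needed to convert the first-type rows of $Q(\{u\}_n|\{v\}_N|\{w\})$ into those of $Q(\{u\}_{n-1}|\{v\}_N|\{w\})$, while the frozen $n$-th row becomes the second-type row indexed by $M-N+n$.

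One step in your verification of Property~2 is misstated and would fail if executed literally: the residue of the $n$-th row at $u_n=w_l/\beta$ (for $M-N+n<l\le M$) is \emph{not} column-independent, and even if it were, a constant row would not force the determinant to vanish. The actual cancellation is that this residue is proportional, as a function of $k$, to
\begin{equation*}
\frac{a(v_k,\{w\})}{v_k-w_l/\beta}=-\frac{\beta}{w_l}\prod_{l'\neq l}\left(1-\frac{\beta v_k}{w_{l'}}\right),
\end{equation*}
which is (up to a constant) the already-present second-type row with $M-N+j=l$; two proportional rows kill the determinant, so the residue of the determinant vanishes and the quotient in Property~2 is indeed polynomial of the stated degree. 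With that correction your argument is sound and is essentially the paper's own proof.
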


\begin{proof}
We can directly see that the determinant formula \eqref{intermediatedeterminant}
satisfies all the properties in Lemma~\ref{property}. 
To show Property 4, we utilize  the Cauchy determinant formula
\begin{align}
\mathrm{det}_N \left( \frac{1}{x_j-y_k} \right)
=\frac{\prod_{1 \le j<k \le N}(x_k-x_j)(y_j-y_k)}{\prod_{j,k=1}^N (x_j-y_k)}.
\end{align}

Finally due to Lemma~\ref{uniqueness},   the determinant formula 
\eqref{intermediatedeterminant} holds.
\end{proof}

\begin{corollary}
Taking $n=N$ in \eqref{intermediatedeterminant} yields the determinant
representation of the scalar product for the six-vertex model
with inhomogeneous
parameters \eqref{SP}:
\begin{align}
\bra \Psi(\{ u \}_N,\{w\})| & \Psi(\{ v \}_N,\{w\}) \ket \nn \\
&=
\prod_{j=1}^N (2v_j)
\prod_{1 \le j<k \le N }
\frac{(u_j+u_k)(v_j+v_k)}{(u_j-u_k)(v_k-v_j)}
\mathrm{det}_N Q(\{ u \}_N|\{ v \}_N|\{ w \})
\end{align}
with
\begin{align}
Q(\{ u \}_N|\{ v \}_N|\{ w \})_{jk} 
=
\frac{
a(u_j,\{w\})d(v_k,\{w\})
-a(v_k,\{w\})d(u_j,\{w\})
}
{
v_k-u_j
}.
\end{align}
Further taking the homogeneous limit $w_j\to 1$ ($1\le j \le n$) 
yields \eqref{generalscalar} in 
Theorem~\ref{determinantthm}. 
\end{corollary}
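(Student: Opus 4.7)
The plan is to derive both parts of the corollary by direct specialization of Theorem~\ref{determinantthm}, with no further vertex-model computation required. First I would set $n=N$ in the definition \eqref{intermediatedef}: the boundary covector $\langle 1^{N-n} 0^{M-N+n}|$ collapses to $\langle 0^M| = \langle \Omega|$, so the intermediate scalar product $S(\{u\}_N|\{v\}_N|\{w\})$ becomes exactly the full scalar product \eqref{SP}. This identifies the left-hand side of the desired inhomogeneous identity with the $n=N$ specialization of \eqref{intermediatedeterminant}.

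Next I would simplify the right-hand side of \eqref{intermediatedeterminant} at $n=N$. Three things trivialize simultaneously: the product $\prod_{M-N+n+1 \le j<k \le M} 1/(\beta(1-w_k/w_j))$ is indexed by the empty range $M+1 \le j<k \le M$ and equals $1$; the second branch $n+1 \le j \le N$ in the piecewise definition of $Q(\{u\}_n|\{v\}_N|\{w\})_{jk}$ is vacuous, so only the first branch contributes for every row; and within that first branch the prefactor $\prod_{l=M-N+n+1}^M (\beta u_j + w_l)/(\beta u_j - w_l)$ is again an empty product equal to $1$. What survives is precisely the inhomogeneous determinant formula stated in the corollary.

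For the homogeneous assertion I would then send $w_j \to 1$ for $1 \le j \le M$. By \eqref{homogeneous} the monodromy operators $\mathcal{B}(v,\{w\})$ and $\mathcal{C}(u,\{w\})$ degenerate into their homogeneous counterparts, so the scalar product on the left becomes $\langle \Psi(\{u\}_N)|\Psi(\{v\}_N)\rangle$. On the right, $a(v,\{w\})$ and $d(v,\{w\})$ reduce to $a(v) = (1-\beta v)^M$ and $d(v) = v^M$, which are precisely the functions appearing in \eqref{element}; the remaining combinatorial prefactor in \eqref{intermediatedeterminant} does not involve $\{w\}$ at $n=N$. This delivers \eqref{generalscalar} exactly.

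The only conceivable obstacle is that the limit $w_j \to 1$ could in principle introduce $0/0$ indeterminacies through the factor $\prod_{M-N+n+1 \le j<k \le M} 1/(\beta(1-w_k/w_j))$ in Theorem~\ref{determinantthm}. This is exactly the factor whose index range is empty at $n=N$, however, so the dangerous product has already disappeared before the limit is taken, and the surviving expression depends analytically on $w_j$ near $1$. The argument is therefore a bookkeeping verification built entirely on Theorem~\ref{determinantthm}, without any additional analysis of the Yang--Baxter structure.
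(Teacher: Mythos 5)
Your proposal is correct and follows essentially the same route as the paper: the corollary is a direct specialization of Theorem~\ref{determinantthm} at $n=N$, where the boundary covector reduces to $\langle\Omega|$, the $w$-dependent prefactor and the second branch of $Q$ become empty, and the homogeneous limit $w_j\to 1$ then gives \eqref{generalscalar}. Your added remark that the potentially singular product has already vanished before the limit is taken is a sensible bookkeeping check consistent with the paper's (implicit) argument.
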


\section{Matrix elements of the $t$-deformed boson model and the six-vertex model}
In this section, we derive matrix elements for the $t$-deformed boson model
for the generic parameter $t$, and then we restrict ourselves to the case of the
six-vertex model $t=-1$.

Consider the arbitrary off-shell
state, i.e., the parameters $\{v\}_N$ in the $N$-particle state
\eqref{statevector-B} are arbitrary.
The orthonormal basis of
the $N$-particle state $|\Psi(\{v \}_N) \ket$
and its dual $\bra \Psi(\{v \}_N)|$
is given by $| \{ n \}_{M,N} \ket := |n_1\ket_1\
\otimes \dots \otimes |n_{M}\ket_{M}$
and $\bra \{ n \}_{M,N}| := _{1}\bra n_1| \otimes \dots
\otimes _{M} \bra n_{M}|$, where $n_1+\cdots+n_{M}=N$.
The wavefunctions can be expanded in this basis as
\begin{align}
&|\Psi(\{v \}_N) \ket
=\sum_{\substack{0 \le n_1,\dots,n_{M} \le N \\
n_1+\cdots+n_{M}=N}}
\bra \{ n \}_{M,N}|\psi(\{v \}_N) \ket
|\{ n \}_{M,N} \ket, \\
&\bra \Psi(\{v \}_N)|
=\sum_{\substack{0 \le n_1, \dots,n_{M} \le N \\
n_1+\cdots+n_{M}=N}}
\bra \{ n \}_{M,N}| \bra \psi(\{v \}_N)| \{ n \}_{M,N} \ket.
\end{align}
There is a one-to-one correspondence between
the set $\{ n \}_{M,N}=\{n_1,\dots,n_{M} \}$ ($n_1+\cdots+n_{M}=N$)
and the Young diagram $x=(x_1,x_2,\dots,x_N)$
($M \ge x_1 \ge x_2 \ge \cdots \ge x_N \ge 1$).
Namely, each Young diagram $x$ under the constraint
$x_1 \le M$, $\ell(x)=N$
can be labeled by a set of integers $\{ n \}_{M,N}$ as
$x=(M^{n_{M}}, \dots, 1^{n_1})$.

The following definition \cite{Wh}
on the ordering on the basis of particle configurations
is useful for later purpose.
\begin{definition} \cite{Wh}
For two configurations
$\{ m \}_{M,N+1}=\{m_1,\dots,m_{M} \}$ $(m_1+\cdots+m_{M}=N+1)$
and
$\{ n \}_{M,N}=\{n_1,\dots,n_{M} \}$ $(n_1+\cdots+n_{M}=N)$,
let $\sum_j^m=\sum_{k=j}^{M}m_k$ and $\sum_j^n=\sum_{k=j}^{M}n_k$.
We say that the particle configurations
$\{ m \}_{M,N+1}$  and $\{ n \}_{M,N}$ are admissible, if and only if
$0 \le (\sum_j^m-\sum_j^n) \le 1$ ($1 \le j \le M$), and write this relation
as $\{ m \}_{M,N+1} \triangleright \{ n \}_{M,N}$.
\end{definition}
Moreover we also define the ordering on the Young diagrams.
\begin{definition}
For two Young diagrams $y=(y_1,y_2,\dots,y_{N+1})$ and
$x=(x_1,x_2,\dots,x_N)$,
we say that $y$ and $x$ interlace,
if and only if $y_j \ge x_j \ge y_{j+1}$ $(j=1,\dots,N)$,
and write this relation as $y \succ x$. 
\end{definition}
\begin{proposition}\label{admissible}
Let $\{m\}_{M,N+1}$ and $\{n\}_{M,N}$ be the particle configurations
described by the Young diagram $y=(y_1,\dots,y_{N+1})$
and $x=(x_1,\dots,x_{N})$. Then 
\begin{align}
y \succ x \Longleftrightarrow \{ m \}_{M,N+1} \triangleright \{ n \}_{M,N}.
\end{align}
\end{proposition}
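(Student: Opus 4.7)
The plan is to translate both conditions into statements about the conjugate partitions $y'$ and $x'$, where $\lambda'_j := \#\{i : \lambda_i \ge j\}$. The first step is to identify the partial sums appearing in the admissibility condition with the conjugate parts: since the correspondence $y = (M^{m_M}, \dots, 1^{m_1})$ says that $m_k$ is the multiplicity of $k$ as a part of $y$, one has $\sum_j^m = \sum_{k=j}^M m_k = \#\{i : y_i \ge j\} = y'_j$, and likewise $\sum_j^n = x'_j$. The statement to be proved thus reduces to the classical equivalence: $y \succ x$ if and only if $0 \le y'_j - x'_j \le 1$ for every $j = 1, \dots, M$.

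For the forward direction I would fix $j$, set $q := x'_j$ so that $x_q \ge j > x_{q+1}$ (with the convention $x_{N+1} = 0$), and extract the value of $y'_j$ from the interlacing inequalities $y_i \ge x_i \ge y_{i+1}$. The inequalities $y_i \ge x_i$ for $i = 1, \dots, q$ give $y_1, \dots, y_q \ge j$, so $y'_j \ge q$; and $x_{q+1} \ge y_{q+2}$ gives $y_{q+2} < j$, so $y'_j \le q+1$. Hence $y'_j - x'_j \in \{0, 1\}$, with the boundary cases $q = 0$ and $q = N$ treated separately but by essentially the same argument.

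For the converse I would invoke the inversion formula $\lambda_i \ge k \Longleftrightarrow \lambda'_k \ge i$ and translate the interlacing inequalities into conditions on conjugate parts: $y_j \ge x_j$ becomes the implication $x'_k \ge j \Rightarrow y'_k \ge j$, which is immediate from $y'_k \ge x'_k$; and $x_j \ge y_{j+1}$ becomes $y'_k \ge j+1 \Rightarrow x'_k \ge j$, which is immediate from $y'_k - x'_k \le 1$. I do not expect any deep obstacle; the only care needed is in tracking the boundary conventions (empty columns, the extra part $y_{N+1}$, and the degenerate cases where $q = 0$ or $q = N$) so that the count functions behave correctly at the edges of their ranges.
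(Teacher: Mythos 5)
Your proof is correct. The paper states this proposition without giving any proof of its own, so there is nothing to compare against; your argument --- identifying the partial sums $\sum_j^m$ and $\sum_j^n$ with the conjugate parts $y'_j$ and $x'_j$ and then establishing the standard equivalence between the interlacing condition $y \succ x$ and the vertical-strip condition $0 \le y'_j - x'_j \le 1$ for $1 \le j \le M$ --- is the natural route, and the individual steps, including the boundary cases $q=0$, $q=N$ and the last inequality $x_N \ge y_{N+1}$, all check out.
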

\noindent
For $\{ m \}_{M,N+1}$ and $\{ n \}_{M,N}$,
we introduce $\{ p \}_r=\{1 \le p_1<\cdots<p_r \le M \}$ to be the set of all integers
$p$ such that $m_p=n_p+1$, and
$\{ q \}_s=\{1 \le q_1<\cdots<q_s \le M \}$ to be the set of all integers
$q$ such that $m_q+1=n_q$.
When $\{ m \}_{M,N+1}$ and $\{ n \}_{M,N}$ satisfy the
admissible condition $\{ m \}_{M,N+1} \triangleright \{ n \}_{M,N}$,
$\{ p \}_r$ and $\{ q \}_s$ satisfy
$s=r-1$ and $p_k < q_k < p_{k+1}$ ($k=1,\dots,r-1$).

From the matrix elements of the $L$-operator, one finds
\begin{align}
\bra \{ m \}_{M,N+1}|\mathcal{B}(v)| \{ n \}_{M,N} \ket=0, \ \ \ 
\mathrm{unless} \ \ \ \{ m \}_{M,N+1} \triangleright \{ n \}_{M,N}.
\end{align}
When the admissible condition is satisfied, one finds the following.

\begin{align}
\langle \{m \}_{M,N+1}|
\mathcal{B}(v)|\{ n \}_{M,N} \rangle
=&{}_{a} \bra 0 | \bra \{m\}_{M,N+1}| \prod_{j=1}^{M} \mathcal{L}_{aj}(v)
|1 \ket_{a} |\{ n \}_{M,N} \ket \nonumber \\
=&
v^{\sum_{j=1}^r p_j-\sum_{j=1}^{r-1} q_j}
\prod_{j=1}^r (1-t^{n_{p_j}+1}) \prod_{j=1}^r \prod_{k=p_j+1}^{q_j-1}
(1-\beta v t^{n_k}), \label{tgeneric}
\end{align}
where $q_0=0, \ q_r=M+1$. \\
This can be shown by combining the following partial actions:
\begin{align}
&\prod_{l=q_{j-1}+1}^{q_j} \mathcal{L}_{al}(v)| 1 \ket_a \otimes
\left\{ \otimes_{k=q_{j-1}+1}^{q_j}
| n_k \ket_k \right\} \nonumber \\
&\qquad =v^{p_j-q_{j-1}} (1-t^{n_{p_j}+1})
\prod_{l=p_j+1}^{q_j-1}(1-\beta v t^{n_l})
| 1 \ket_a \left\{ \otimes_{k=q_{j-1}+1}^{q_j}
| m_k \ket_k \right\} \quad (1 \le j \le r-1), \nonumber \\
&\prod_{l=q_{r-1}+1}^{M} \mathcal{L}_{al}(v)| 1 \ket_a  \otimes
\left\{ \otimes_{k=q_{r-1}+1}^{M}
| n_k \ket_k \right\} \nonumber \\
&\qquad =v^{p_r-q_{r-1}} (1-t^{n_{p_r}+1})
\prod_{l=p_r+1}^{M}(1-\beta v t^{n_l})
| 0 \ket_a \left\{ \otimes_{k=q_{r-1}+1}^{M}
| m_k \ket_k \right\}.
\end{align}

Next, we examine the matrix elements of the one-row $B$ and $C$ operators
furthermore at the point $t=-1$.
We first reduce the matrix elements \eqref{tgeneric} to a simpler form and then
translate into the language of Young diagrams.
The result for the
matrix elements in the language of Young diagrams
can be summarized as follows.
\begin{proposition}
The matrix elements 
$\langle \{ m \}_{M,N+1}|\mathcal{B}(v)|\{ n \}_{M,N} \rangle$ at $t=-1$
are given by
\begin{align}
&\langle \{ m \}_{M,N+1}|\mathcal{B}(v)|\{ n \}_{M,N} \rangle
\nonumber \\
&\qquad =\frac{1+2 \beta z}{(1+\beta z)^{M+1}}
\left\{ \frac{2(1+\beta z)}{1+2 \beta z} \right\}^{\#(y|x)}
z^{\sum_{j=1}^{N+1} y_j-\sum_{j=1}^N x_j}
\prod_{j=1}^N \{1+2 \beta z(1-\delta_{x_j y_{j+1}}) \}.
\label{tminusoneskew}
\end{align}
when $x=(x_1,\dots,x_N)=(M^{n_M},\dots,1^{n_1})
\subset M^N$
and $y=(y_1,\dots,y_{N+1})=(M^{m_M},\dots,1^{m_1})
\subset M^{N+1}$
are strict partitions satisfying $y \succ x$,
and zero otherwise. Here $\#(y|x)$ denotes the 
number of parts in $y$ which are not in $x$.
We regard the product in the right hand side of \eqref{tminusoneskew}
as 1 when $x=\phi$.
\end{proposition}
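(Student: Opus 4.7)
The proof plan is to specialize the generic-$t$ formula~\eqref{tgeneric} to $t=-1$ and translate from particle-configuration data to Young-diagram data using Proposition~\ref{admissible}. First, at $t=-1$ with $n_k,m_k\in\{0,1\}$, the admissibility condition $m_{p_j}=n_{p_j}+1$ forces $n_{p_j}=0$, so the ``flip'' factors satisfy $\prod_{j=1}^r(1-(-1)^{n_{p_j}+1})=2^r$, and each interior factor $1-\beta v(-1)^{n_k}$ equals $1-\beta v$ or $1+\beta v$ according to whether $n_k$ is $0$ or $1$. Second, Proposition~\ref{admissible} identifies $\{p_j\}_{j=1}^r$ with the parts of $y$ not in $x$ and $\{q_j\}_{j=1}^{r-1}$ with the parts of $x$ not in $y$; hence $r=\#(y|x)$ and $\sum_j p_j-\sum_j q_j=\sum_{j=1}^{N+1}y_j-\sum_{j=1}^N x_j$. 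The sites $k\notin\{p_j\}\cup\{q_j\}$ with $n_k=1$ are exactly the parts shared by $x$ and $y$.

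The main task is then to recast the interior product into the local form $\prod_{j=1}^N\{1+2\beta z(1-\delta_{x_j y_{j+1}})\}$ with the stated global prefactor. For this I would introduce the reparameterization $v=z/(1+\beta z)$, under which $1-\beta v=(1+\beta z)^{-1}$, $1+\beta v=(1+2\beta z)/(1+\beta z)$, and $v=z/(1+\beta z)$. After substitution, each interior empty site contributes $(1+\beta z)^{-1}$ and each interior shared-part site contributes $(1+2\beta z)/(1+\beta z)$, while the overall $v^{\sum p_j-\sum q_j}$ factor becomes $z^{|y|-|x|}$ divided by an appropriate power of $1+\beta z$.

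The key combinatorial observation is that in the strict-partition interlacing $y_1\ge x_1\ge\cdots\ge x_N\ge y_{N+1}$, a position $j$ with $x_j=y_{j+1}$ records a ``coincidence'' (a shared part sandwiched between a $p$-flip and the next $q$-flip), while $x_j>y_{j+1}$ records a genuine interlacing gap. Under the substitution, a coincidence contributes a factor that fuses with the global $(1+\beta z)^{-(M+1)}$ prefactor to give the trivial $1$ in $\prod_j(1+2\beta z(1-\delta_{x_jy_{j+1}}))$, whereas a gap yields $(1+2\beta z)$.

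The main obstacle is the global bookkeeping in this last step: one must count, in each interior interval $[p_j+1,q_j-1]$, how many empty sites (each contributing $1-\beta v$) and how many shared parts (each contributing $1+\beta v$) occur, and then show that after substitution the total product telescopes into $\frac{1+2\beta z}{(1+\beta z)^{M+1}}\bigl\{\tfrac{2(1+\beta z)}{1+2\beta z}\bigr\}^{\#(y|x)}z^{|y|-|x|}\prod_{j=1}^N\{1+2\beta z(1-\delta_{x_j y_{j+1}})\}$. The $(1+\beta z)^{-(M+1)}$ factor comes from accumulating all empty-site contributions; identifying which local coincidence belongs to which interior interval, and reconciling it with the prefactor $\{2(1+\beta z)/(1+2\beta z)\}^{\#(y|x)}$, is the genuine combinatorial input and is where the interlacing structure of strict partitions enters in an essential way.
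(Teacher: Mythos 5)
Your route is the same as the paper's: specialize the generic-$t$ formula \eqref{tgeneric} to $t=-1$ (where both configurations are forced to be $0/1$ sequences, i.e.\ strict partitions), substitute $z=(v^{-1}-\beta)^{-1}$ so that $1-\beta v=(1+\beta z)^{-1}$ and $1+\beta v=(1+2\beta z)/(1+\beta z)$, and convert the $p/q$ data into Young-diagram data. The bookkeeping you defer is handled in the paper by two translation rules quoted from \cite{MS}: $\sum_j p_j-\sum_j q_j=\sum_j y_j-\sum_j x_j$ and, crucially, $r-1+\#\{k\in\cup_j\{p_j+1,\dots,q_j-1\}\,:\,n_k=1\}=\#\{j\,:\,x_j\neq y_{j+1}\}$; combined with the elementary count $\#\{k\in\cup_j\{p_j+1,\dots,q_j-1\}\}=M+1-r-(\sum_j p_j-\sum_j q_j)$, these make the product close up exactly as you predict. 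However, the one concrete combinatorial claim you do commit to is reversed, and it is precisely the crux you flag as ``the genuine combinatorial input'': a coincidence $x_j=y_{j+1}$ corresponds to a shared part lying \emph{between a $q$-flip and the next $p$-flip}, i.e.\ \emph{outside} the interior intervals, not sandwiched between a $p$-flip and the following $q$-flip. The shared parts inside the intervals $(p_j,q_j)$ --- the ones carrying the factor $1+\beta v=(1+2\beta z)/(1+\beta z)$ --- are exactly the non-coincidences beyond the $r-1$ forced by the $p/q$ alternation. (Check against the paper's example $y=(5,4,1)$, $x=(4,3)$: the unique shared part $4$ sits at position $4\in(q_1,p_2)=(3,5)$ and produces the coincidence $x_1=y_2=4$, while the interior interval $\{2\}$ is empty of shared parts.) With your identification as written, the $(1+2\beta z)$ factors would attach to coincidences rather than gaps and the telescoping would fail; with it corrected, your plan reproduces the paper's proof.
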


\begin{proposition}
The matrix elements 
$\langle \{ n \}_{M,N}|\mathcal{C}(v)|\{ m \}_{M,N+1} \rangle$ at $t=-1$
are given by
\begin{align}
&\langle \{ n \}_{M,N}|\mathcal{C}(v)|\{ m \}_{M,N+1} \rangle
\nonumber \\
=&\frac{1}{z(1+\beta z)^{M-1}}
\left\{ \frac{2(1+\beta z)}{1+2 \beta z} \right\}^{\#(y^\vee|x^\vee)-1}
z^{\sum_{j=1}^{N+1} y_j^\vee-\sum_{j=1}^N x_j^\vee}
\prod_{j=1}^N \{1+2 \beta z(1-\delta_{x_j^\vee y_{j+1}^\vee}) \}.
\label{tminusoneskewc}
\end{align}
when $x^\vee=(M+1)^N / x=(x_1^\vee,\dots,x_N^\vee)
=(M^{n_1},\dots,1^{n_M})
\subset M^N$
and $y^\vee=(M+1)^{N+1} / y=(y_1^\vee,\dots,y_{N+1}^\vee)
=(M^{m_1},\dots,1^{m_M})
\subset M^{N+1}$
are strict partitions
satisfying $y^\vee \succ x^\vee$,
and zero otherwise.
We regard the product in the right hand side of \eqref{tminusoneskewc}
as 1 when $x^\vee=\phi$.
\end{proposition}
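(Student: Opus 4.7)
The plan is to mirror the proof of the preceding proposition. Since $\mathcal{C}(v) = {}_a\langle 1|\mathcal{T}_a(v)|0\rangle_a$ corresponds to the auxiliary line entering on the right in state $|0\rangle_a$ and exiting on the left as $\langle 1|_a$, the computation is structurally parallel to the one for $\mathcal{B}(v)$, but with the roles of the off-diagonal $L$-operator entries interchanged. In particular, the admissibility condition $\{m\}_{M,N+1} \triangleright \{n\}_{M,N}$ and the interlacing $p_1 < q_1 < p_2 < \cdots < q_{r-1} < p_r$ (with $\{p\}_r$ the sites where $m_p = n_p + 1$ and $\{q\}_{r-1}$ those where $n_q = m_q + 1$, as already defined) still characterize the support of the nonzero matrix elements.

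First I would derive a closed-form expression for $\langle \{n\}_{M,N}|\mathcal{C}(v)|\{m\}_{M,N+1}\rangle$ at general $t$, analogous to \eqref{tgeneric}. Since the auxiliary line now starts in $|0\rangle_a$ and ends in $\langle 1|_a$, the partial actions between consecutive flip sites carry the opposite auxiliary state from the $\mathcal{B}$ case. Accumulating contributions, each $p$-site produces a factor $1$ from the $B$-entry of $L$ (rather than $v(1-t^{n+1})$ from $vB^\dagger$ as in $\mathcal{B}$), each $q$-site produces $v(1-t^{n_q})$ from $vB^\dagger$, and each non-flip site produces $1 - \beta v t^{m_j}$ or $v$ depending on whether the auxiliary on its segment is $0$ or $1$. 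Collecting all such factors yields the general-$t$ formula.

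Specializing to $t=-1$ restricts the occupations to $\{0, 1\}$, so that $|\{m\}\rangle, |\{n\}\rangle$ correspond to strict partitions $y, x$, and the factors collapse to $\{0, 2\}$ and $\{1 \mp \beta v\}$. The remaining task is to recast the product in terms of the dual partitions $x^\vee, y^\vee$, defined by the reflection $x_j \mapsto M+1-x_{N+1-j}$: under this reflection the sites $\{p\}, \{q\}$ correspond to comparison positions between consecutive parts of $x^\vee, y^\vee$, and the statistic $\#(y^\vee|x^\vee)$ counts the parts of $y^\vee$ absent from $x^\vee$. The main obstacle I anticipate is the careful bookkeeping required to pin down the exact prefactor $1/(z(1+\beta z)^{M-1})$ and, in particular, the $-1$ shift in the exponent of $2(1+\beta z)/(1+2\beta z)$ (compared with the $\mathcal{B}$ formula, where the corresponding exponent is $\#(y|x)$ with no shift); this shift reflects the fact that the $\mathcal{C}$ auxiliary line enters and exits with polarities opposite to those of the $\mathcal{B}$ line, so that one of the flip-site contributions plays the role of a boundary event in the dual counting.
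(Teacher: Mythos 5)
Your proposal is correct and follows essentially the same route as the paper, which proves only the $\mathcal{B}$ case in detail and states that \eqref{tminusoneskewc} ``can be calculated in the same way.'' You correctly identify the structural changes — the swap of the $B$ and $vB^\dagger$ entries at the flip sites, the reversed auxiliary polarities, the restriction to strict partitions at $t=-1$, and the passage to the dual partitions $x^\vee,y^\vee$ — and your explanation of the $-1$ shift (there are $r-1$ creation sites for $\mathcal{C}$ versus $r$ for $\mathcal{B}$, each carrying the factor $1-t^{n_q}=2$) is the right accounting.
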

\begin{proof}
We show \eqref{tminusoneskew}. Eq.~\eqref{tminusoneskewc} can be calculated in the
same way. 

First, we note that
since $\prod_{j=1}^r(1-t^{n_{p_j}+1}) \neq 0$ unless 
$n_{p_j}=1 \ (\mathrm{mod} \ 2)$ for all $j$,
we only need to consider the case when
both $\{ m \}_{M,N+1}$ and $\{ n \}_{M,N}$ are sequences of numbers 0 and 1.
Otherwise, the matrix elements vanish as explained in section~2.
Translating this restriction to the language of Young diagrams,
this means that we restrict both the partitions $y$ and $x$
to be strict.

The matrix elements for the case $y \succ x$
is calculated as
\eqref{tgeneric} for generic $t$,
which can be furthermore simplified at $t=-1$ as follows
\begin{align}
\langle \{ m \}_{M,N+1}|
\mathcal{B}(v)|\{ n \}_{M,N} \rangle
=&2^r
v^{\sum_{j=1}^r p_j-\sum_{j=1}^{r-1} q_j}
(1-\beta v)^{\# \{k \in \cup_{j=1}^{r} \{ p_j+1, \dots , q_j-1 \} |n_k=0 \}} \nonumber \\
&\times
(1+\beta v)^{\# \{k \in \cup_{j=1}^{r} \{ p_j+1, \dots , q_j-1 \} |n_k=1 \}},
\end{align}
where $\{ m \}_{M,N+1}$ and $\{ n \}_{M,N}$ are both sequences of 0 and 1
satisfying $\{ m \}_{M,N+1} \triangleright \{ n \}_{M,N}$. \\
Using
\begin{align}
&(1-\beta v)^{\# \{k \in \cup_{j=1}^{r} \{ p_j+1, \dots , q_j-1 \} 
|n_k=0 \}} \nonumber \\
&\qquad \qquad =(1-\beta v)^{\# \{k \in \cup_{j=1}^{r} \{ p_j+1, \dots , q_j-1 \} 
\}}
(1-\beta v)^{-\# \{k \in \cup_{j=1}^{r} \{ p_j+1, \dots , q_j-1 \} 
|n_k=1 \}} \nonumber \\
&\qquad \qquad=(1-\beta v)^{
\sum_{j=1}^{r-1} q_j-\sum_{j=1}^r p_j+M+1-r
-\# \{k \in \cup_{j=1}^{r} \{ p_j+1, \dots , q_j-1 \} 
|n_k=1 \}},
\end{align}
one has
\begin{align}
\langle \{ m \}_{M,N+1}|\mathcal{B}(v)|&\{ n \}_{M,N} \rangle \nonumber \\
=&2^r
v^{\sum_{j=1}^r p_j-\sum_{j=1}^{r-1} q_j}
(1-\beta v)^{
\sum_{j=1}^{r-1} q_j-\sum_{j=1}^r p_j+M+1-r}
\nonumber \\
&\times \left(\frac{1+\beta v}{1-\beta v} 
\right)^{\# \{k \in \cup_{j=1}^{r} \{ p_j+1, \dots , q_j-1 \} |n_k=1 \}} \nonumber \\
=&2^r
(1-\beta v)^{M+1-r}
(v^{-1}-\beta)^{\sum_{j=1}^{r-1} q_j-\sum_{j=1}^r p_j}
\nonumber \\
&\times \left(\frac{1+\beta v}{1-\beta v} 
\right)^{\# \{k \in \cup_{j=1}^{r} \{ p_j+1, \dots , q_j-1 \} |n_k=1 \}}.
\end{align}
From the translation rule \cite{MS}
\begin{align}
&\sum_{j=1}^r p_j-\sum_{j=1}^{r-1}q_j
=\sum_{j=1}^{N+1} y_j-\sum_{j=1}^N x_j, \nn \\
&r-1+\# \{k \in \cup_{j=1}^{r} \{ p_j+1, \dots , q_j-1 \} |n_k \neq 0 \}
=\# \{ j \in \{1, \dots , N \}|x_j \neq y_{j+1} \},
\end{align}
One gets
\begin{align}
\langle \{ m \}_{M,N+1}|& \mathcal{B}(v)|\{ n \}_{M,N} \rangle \nonumber \\
=&2^r
(1-\beta v)^{M-r+1}
(v^{-1}-\beta)^{\sum_{j=1}^N x_j-\sum_{j=1}^{N+1} y_j}
\left(\frac{1+\beta v}{1-\beta v} 
\right)^{\# \{ j \in \{1, \dots , N \}|x_j \neq \mu_{j+1} \}-r+1} \nonumber \\
=&2^r
\left( \frac{1+\beta v}{1-\beta v} \right)^{1-r}
(1-\beta v)^{M+1-r}
(v^{-1}-\beta)^{\sum_{j=1}^N x_j-\sum_{j=1}^{N+1} y_j}
\nonumber \\
&\times \prod_{j=1}^N
\left\{1+\frac{2 \beta v}{1-\beta v}(1-\delta_{x_j,y_{j+1}})
\right\}.
\end{align}
Introducing the variable $z=(v^{-1}-\beta)^{-1}$ and changing the variable
from $v$ to $z$,
the matrix elements can be rewritten as
\begin{align}
&\langle \{ m \}_{M,N+1}|\mathcal{B}(v)|\{ n \}_{M,N} \rangle \nonumber \\
&\qquad =2^r
(1+2 \beta z)^{1-r} (1+\beta z)^{r-M-1}
z^{\sum_{j=1}^{N+1} y_j-\sum_{j=1}^N x_j}
\prod_{j=1}^N \{1+2 \beta z(1-\delta_{x_j,y_{j+1}})
\},
\end{align}
where
$r=:\#(y|x)$ is the number of parts in $y$
which are not in $x$.
\end{proof}
Note that 
the elements of the $L$-operator $\mathcal{L}_{aj}(v)$ \eqref{Lop1}--\eqref{Lop2} 
at $t=-1$ reduces to those for the six-vertex model \eqref{sixvertexLoperator}:
\begin{align}
\mathcal{L}_{aj}(v)
=
\begin{pmatrix}
1-\beta v & 0 & 0 & 0 \\
0 & 1+\beta v & 2v & 0 \\
0 & 1 & v & 0 \\
0 & 0 & 0 & v
\end{pmatrix}_{aj}=
\begin{pmatrix}
\frac{1}{1+\beta z} & 0 & 0 & 0 \\
0 & \frac{1+2\beta z}{1+\beta z}  &\frac{2 z}{1+\beta z} & 0 \\
0 & 1 & \frac{z}{1+\beta z} & 0 \\
0 & 0 & 0 &\frac{z}{1+\beta z}
\end{pmatrix}_{aj}. 
\label{sixvertexLoperator2}
\end{align}

\begin{example}
We set $M=5$ and $N=2$
and consider the case
$\{ m \}_{M,N+1}=\{10011 \}$ and $\{ n \}_{M,N}=\{00110 \}$.
Translating into the language of Young diagrams, we have
$y=(5,4,1)$ and $x=(4,3)$.
One finds $\#(y|x)=2$
since 5 and 1 are in $y$ but not in $x$.
We also have
$\sum_{j=1}^3 y_j-\sum_{j=1}^2 x_j=3$
and $x_1=y_2$, $x_2 \neq y_3$.
The right hand side of \eqref{tminusoneskew}
becomes $2^2 z^3 (1+\beta z)^{-4}$ which can be easily checked to
match with the left hand side.
\end{example}
For $\beta=0$, the matrix elements reduce to the
skew Schur $Q$-polynomials \cite{Ts,Kor}
\begin{align}
&\langle \{ m \}_{M,N+1}|\mathcal{B}(v)|\{ n \}_{M,N} \rangle
=2^{\#(y|x)}
z^{\sum_{j=1}^{N+1} y_j-\sum_{j=1}^N x_j}=Q_{y/x}(z), \label{skewspecial}
\end{align}
which we will use to derive the wavefunction in the next section.

\section{Wavefunctions of the six-vertex model}
Let us examine the wavefunction at the point $t=-1$ where the
$t$-boson model reduces to the six-vertex model.
We show the corresponding wavefunction is essentially the Schur polynomials.
\begin{definition}
The Schur polynomial is defined to be the following determinant:
\begin{align}
s_\lambda(\bs{z})=
   \frac{\mathrm{det}_N(z_j^{\lambda_k+N-k})}
        {\prod_{1 \le j < k \le N}(z_j-z_k)},
 \label{Schur}
\end{align}
where $z=\{z_1,\dots,z_N \}$ is a set of variables
and $\lambda$ denotes a Young diagram
$\lambda=(\lambda_1,\lambda_2,\dots,\lambda_N)$
with weakly decreasing non-negative integers
$\lambda_1 \ge \lambda_2 \ge \cdots \ge \lambda_N \ge 0$.
\end{definition}
We show the following equivalence
between the wavefunction of the six-vertex model
and the Schur polynomials.
\begin{theorem}
The wavefunction has the following form
\begin{align}
\langle \{ m \}_{M,N}|\prod_{j=1}^N \mathcal{B}(v_j)|\Omega \rangle
=\frac{2^N \prod_{j=1}^N z_j \prod_{1 \le j < k \le N}(z_j+z_k+2 \beta z_j z_k)}{\prod_{j=1}^N (1+\beta z_j)^M} s_\lambda(\bs{z}). \label{wavefunctiontheorem}
\end{align}
where $z_j=(v_j^{-1}-\beta)^{-1}$ and
$\lambda=(\lambda_1,\dots,\lambda_N) \subset (M-N)^N$ is a partition
related to $x=(x_1,\dots,x_N)=(M^{m_M},\dots,1^{m_1})$
by $\lambda_j=x_{j}-N+j-1$.

The dual wavefunction has the following form
\begin{align}
\langle \Omega |\prod_{j=1}^N \mathcal{C}(v_j)|\{ m \}_{M,N} \rangle
=\frac{\prod_{1 \le j < k \le N}(z_j+z_k+2 \beta z_j z_k)}
{\prod_{j=1}^N (1+\beta z_j)^{M-1}} s_{\lambda^\vee}(\bs{z}),
\label{dualwavefunctiontheorem}
\end{align}
where $\lambda^\vee$ is the Poincare dual of $\lambda$:
$\lambda^\vee=(\lambda_1^\vee,\dots,\lambda_N^\vee), \ 
\lambda_j^\vee=M-N-\lambda_{N+1-j}$.
\end{theorem}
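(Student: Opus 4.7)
The plan is to follow the Izergin--Korepin strategy in parallel with the proof of the scalar product determinant in Section~3: identify a short list of properties that uniquely characterize the wavefunction, then check that both sides of \eqref{wavefunctiontheorem} satisfy them. Define the renormalized wavefunction
$$
\widetilde{W}_N(\{z\}_N) := \frac{\prod_{j=1}^N (1+\beta z_j)^M}{2^N \prod_{j=1}^N z_j \prod_{1\le j<k\le N}(z_j+z_k+2\beta z_j z_k)} \langle\{m\}_{M,N}|\prod_{j=1}^N \mathcal{B}(v_j)|\Omega\rangle,
$$
so that the claim becomes $\widetilde{W}_N = s_\lambda(\bs{z})$.

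First, I would show that $\widetilde{W}_N$ is a symmetric polynomial in $\{z_j\}$ of degree at most $M-N$ in each variable. Symmetry follows from $[\mathcal{B}(u),\mathcal{B}(v)]=0$, which is one of the relations implied by the $RTT$-relation \eqref{RTT}. Polynomiality is read off from the rescaled $L$-operator in \eqref{sixvertexLoperator2}: each entry is a polynomial in $z$ of degree at most $1$ times the overall factor $(1+\beta z)^{-1}$, so a single-row $\mathcal{B}(v_j)$ produces $(1+\beta z_j)^{-M}$ multiplied by a polynomial in $z_j$. Divisibility by the factors $z_j$ and $(z_j+z_k+2\beta z_j z_k)$ can be detected by identifying explicit zeros: the factor $z_j$ comes from the minimal monomial in the ice-rule expansion (each row must flip at least one arrow), while the symmetry factor follows by showing that setting $z_j+z_k+2\beta z_j z_k = 0$ forces cancellation among pairs of six-vertex configurations related by exchanging rows, analogously to the argument giving Bethe-type symmetry in the five-vertex case of \cite{MSvertex}.

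Second, I would establish a recursion by specializing $z_N$ at a distinguished value (for example $z_N\to\infty$, or a boundary value for which the top row of the lattice partition function freezes into a single configuration). Such a specialization collapses the row associated with $v_N$, and reduces $\widetilde{W}_N$ to $\widetilde{W}_{N-1}$ evaluated at a smaller Young diagram, multiplied by an explicit factor. Together with the base case $N=1$, which is immediate from the matrix element formula \eqref{tminusoneskew} with $x=\phi$ (giving a single monomial matching $s_{(\lambda_1)}(z_1)=z_1^{\lambda_1}$), this recursion uniquely determines a symmetric polynomial of the stated degree, in the spirit of Lemma~\ref{uniqueness}.

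Third, I would check that $s_\lambda(\bs{z})$ satisfies the same properties. Symmetry and the degree bound are immediate from the definition \eqref{Schur}; the corresponding recursion under specialization of one variable follows from a standard row expansion of the Weyl-type determinant $\mathrm{det}_N(z_j^{\lambda_k+N-k})$. Uniqueness then forces $\widetilde{W}_N = s_\lambda(\bs{z})$. The dual identity \eqref{dualwavefunctiontheorem} follows from the entirely analogous argument run with the $\mathcal{C}(v)$ matrix elements \eqref{tminusoneskewc} and the complementation $\lambda\mapsto\lambda^\vee$, or alternatively by applying the particle--hole involution that sends $\mathcal{B}\leftrightarrow\mathcal{C}$ and $\lambda\leftrightarrow\lambda^\vee$.

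The main obstacle is the bookkeeping around the prefactor $\prod_{j<k}(z_j+z_k+2\beta z_j z_k)$: showing that it divides $\prod_j(1+\beta z_j)^M \langle\{m\}|\prod_j\mathcal{B}(v_j)|\Omega\rangle$, and that the specialization recursion is compatible with it, requires careful manipulation of the six-vertex configurations and of the degree estimates. Once these points are in place, the uniqueness and verification steps are routine.
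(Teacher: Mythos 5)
There is a genuine gap in the uniqueness step, and it hides the one idea that actually drives the paper's proof. After dividing out the prefactor, your $\widetilde{W}_N$ is a (symmetric) polynomial of degree up to $M-N$ in $z_N$; to pin such a polynomial down you need its value at $M-N+1$ points, but your recursion supplies only a single distinguished specialization of $z_N$ (a freezing value or $z_N\to\infty$) together with the base case $N=1$. In the Izergin--Korepin argument of Section~3 this is rescued by the inhomogeneities $w_1,\dots,w_M$, which furnish $M$ distinct freezing points $u_n=\beta^{-1}w_{M-N+n}$ before the homogeneous limit is taken; you work directly in the homogeneous model, where all these points collapse to one, so ``uniqueness then forces $\widetilde{W}_N=s_\lambda$'' does not follow from the properties you list. (Your divisibility claim via pairing configurations at $z_j+z_k+2\beta z_jz_k=0$ is also only asserted, not proved, though that part is repairable.)

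The paper takes a different and shorter route that you should compare with: it shows (Lemma~\ref{independence}) that $\langle\{m\}_{M,N}|\prod_j\widetilde{\mathcal{B}}(z_j)|\Omega\rangle$ is a polynomial in $\beta$ of degree at most $N(N-1)/2$ \emph{and} is divisible by $\prod_{j<k}(z_j+z_k+2\beta z_jz_k)$, whose $\beta$-degree is already $N(N-1)/2$; hence the quotient is independent of $\beta$. This reduces \eqref{wavefunctiontheorem} to the case $\beta=0$, where by \eqref{skewspecial} the single-row matrix elements are skew Schur $Q$-functions and the branching rule $\sum_w Q_{y/w}(z_1)Q_{w/x}(z_2)=Q_{y/x}(z_1,z_2)$ identifies the wavefunction with $Q_x(\bs{z})=2^N\prod_j z_j\prod_{j<k}(z_j+z_k)\,s_\lambda(\bs{z})$. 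The divisibility itself is proved there not by a configuration-pairing symmetry but by reducing to $N=2$ (using $[\widetilde{\mathcal{B}},\widetilde{\mathcal{B}}]=0$ and a resolution of the identity) and an explicit induction on $M$ via the recursion for $f_M(z_1,z_2)$. If you want to salvage your plan, either import the $\beta$-degree count to eliminate $\beta$ as the paper does, or reinstate the inhomogeneous parameters so that enough specialization points exist to make your characterization argument close.
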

We remark that the integrable model we consider here
seems to be a special case of the one considered in \cite{Ta},
whose wavefunction was obtained
by the coordinate Bethe ansatz up to a normalization factor.
However, to prove combinatorial formulae,
it is crucial to determine the exact form starting from the first principle, i.e.,
starting from the $L$-operator,
since we combine the above theorem with
the exact expression for the matrix elements
\eqref{tminusoneskew} and \eqref{tminusoneskewc}
derived in the previous section to derive a new combinatorial
formula for example.
\begin{proof}
We show \eqref{wavefunctiontheorem}.
Eq.~\eqref{dualwavefunctiontheorem} can be proved in the same way.
First, we redefine the $L$-operator as
\begin{align}
\widetilde{\mathcal{L}}_{aj}(z)
=(1+\beta z)\mathcal{L}_{aj}(v)
=
\begin{pmatrix}
1 & 0 & 0 & 0 \\
0 & 1+2 \beta z & 2z & 0 \\
0 & 1+\beta z & z & 0 \\
0 & 0 & 0 & z
\end{pmatrix}_{aj}, 
\label{redifinedsixvertexLoperator}
\end{align}
and the corresponding monodromy matrix
\begin{align}
\widetilde{\mathcal{T}}_{a}(z)=\prod_{j=1}^{M} \widetilde{\mathcal{L}}_{a j}(z)
=
\begin{pmatrix}
\widetilde{\mathcal{A}}(z) & \widetilde{\mathcal{B}}(z)  \\
\widetilde{\mathcal{C}}(z) & \widetilde{\mathcal{D}}(z)
\end{pmatrix}_{a}, 
\end{align}
and show the following equivalent equality for \eqref{wavefunctiontheorem}
\begin{align}
\langle \{ m \}_{M,N}|\prod_{j=1}^N \widetilde{\mathcal{B}}(z_j)|\Omega \rangle
=2^N \prod_{j=1}^N z_j \prod_{1 \le j < k \le N}(z_j+z_k+2 \beta z_j z_k) s_\lambda(\bs{z}).
\label{equivalentexpressionwavefunction}
\end{align}
To prove this, we first show the following lemma
\begin{lemma}
\begin{align}
\frac{\langle \{ m \}_{M,N}|\prod_{j=1}^N \widetilde{\mathcal{B}}(z_j)|\Omega \rangle}{\prod_{1 \le j < k \le N}(z_j+z_k+2 \beta z_j z_k)},
\end{align}
does not depend on $\beta$.
\label{independence}
\end{lemma}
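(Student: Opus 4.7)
The plan is to show that $F(\beta;\bs{z}) := \langle \{m\}_{M,N}|\prod_{j=1}^N \widetilde{\mathcal{B}}(z_j)|\Omega\rangle$ has $\beta$-degree at most $\binom{N}{2}$ and is divisible, as a polynomial in $\beta, z_1,\dots,z_N$, by $\prod_{1 \le j<k \le N}(z_j+z_k+2\beta z_j z_k)$. Since the divisor already has $\beta$-degree exactly $\binom{N}{2}$, the quotient must have $\beta$-degree $\le 0$, which is the assertion of the lemma.

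For the degree bound, I would expand $F$ as a lattice sum over six-vertex configurations. From \eqref{redifinedsixvertexLoperator}, only two entries of $\widetilde{\mathcal{L}}$ carry $\beta$: the weights $1+2\beta z$ (type-$b$ vertex) and $1+\beta z$ (type-$d$ vertex), and both require the incoming horizontal edge to be $0$ and the incoming vertical edge to be $1$. In the $k$-th horizontal row of $\prod_j \widetilde{\mathcal{B}}(z_j)|\Omega\rangle$ the bottom vertical row contains exactly $k-1$ particles, so at most $k-1$ $\beta$-dependent vertices can appear in that row; summing over $k=1,\dots,N$ gives $\deg_\beta F \le 0+1+\cdots+(N-1) = \binom{N}{2}$. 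The symmetry of $F$ in $(z_1,\dots,z_N)$, used below, follows from $[\widetilde{\mathcal{B}}(z),\widetilde{\mathcal{B}}(z')]=0$, inherited from $[\mathcal{B}(u),\mathcal{B}(v)]=0$.

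For divisibility, by symmetry it suffices to show that $F$ vanishes on the hypersurface $z_1+z_2+2\beta z_1 z_2=0$. Setting $v_i = z_i/(1+\beta z_i)$ converts this locus into $v_1+v_2=0$, so using commutativity of the $\mathcal{B}$-operators the task reduces to the operator-on-vacuum identity $\mathcal{B}(v)\mathcal{B}(-v)|\Omega\rangle = 0$. I would prove this by setting $X := \mathcal{B}(v)\mathcal{B}(-v)|\Omega\rangle$ and showing $\mathcal{A}(u)X = a(u)X$ and $\mathcal{D}(u)X=d(u)X$ for every $u$. Applying the commutation $\mathcal{A}(u)\mathcal{B}(w) = f(u,w)\mathcal{B}(w)\mathcal{A}(u)+g(u,w)\mathcal{B}(u)\mathcal{A}(w)$ and its $\mathcal{D}$-analogue, together with the $w=-v$ specialisations $\mathcal{A}(v)\mathcal{B}(-v)|\Omega\rangle = -a(-v)\mathcal{B}(v)|\Omega\rangle$ and $\mathcal{D}(v)\mathcal{B}(-v)|\Omega\rangle = d(-v)\mathcal{B}(v)|\Omega\rangle$ (both immediate from the vanishing of $f(v,-v)$ and the value $g(v,-v)=-1$ at $t=-1$), the computation collapses by virtue of the algebraic identities $f(u,v)f(u,-v)=1$ and $f(u,v)g(u,-v)=g(u,v)$ (together with $f(v,u)f(-v,u)=1$ and $f(v,u)g(u,-v)=-g(u,v)$ for the $\mathcal{D}$-side), which force cancellation of the cross terms $\mathcal{B}(u)\mathcal{B}(v)|\Omega\rangle$. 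Since $|\Omega\rangle$ is the unique joint eigenvector of the commuting family $\{\mathcal{A}(u),\mathcal{D}(u)\}_u$ with the vacuum eigenvalue profile $(a(u), d(u))$, and $X$ lies in the $2$-particle sector orthogonal to $|\Omega\rangle$, we conclude $X=0$.

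The main obstacle is orchestrating the cancellation in the $\mathcal{A}(u)X$ and $\mathcal{D}(u)X$ computations: each single step uses only the commutation relations and their $w=-v$ specialisations, but the cross terms cancel solely because of the specific $t=-1$ values of $f$ and $g$, so the bookkeeping must be done carefully. The uniqueness of $|\Omega\rangle$ as a joint $\mathcal{A}$-$\mathcal{D}$ eigenvector is standard and follows from particle-number conservation on $\mathcal{F}^{\otimes M}$ together with the fact that the vacuum eigenvalue $a(u)\prod_j f(u,v_j)|_{\{v_j\}=\emptyset}=a(u)$ cannot be matched by any non-trivial on-shell Bethe state.
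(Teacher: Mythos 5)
Your degree count and your divisibility strategy are sound in outline, and the algebra you invoke does check out at $t=-1$: with $f(u,v)=-(u+v)/(u-v)$ and $g(u,v)=2v/(u-v)$ one indeed has $f(v,-v)=0$, $g(v,-v)=-1$, $f(u,v)f(u,-v)=1$ and $f(u,v)g(u,-v)=g(u,v)$, so the two-step exchange computation does give $\mathcal{A}(u)X=a(u)X$ and $\mathcal{D}(u)X=d(u)X$ for $X=\mathcal{B}(v)\mathcal{B}(-v)|\Omega\rangle$, and the locus $z_1+z_2+2\beta z_1z_2=0$ is correctly identified with $v_1+v_2=0$. This is a genuinely different route from the paper, which reduces Property~2 to $N=2$ by commutativity and then runs an explicit induction on $M$ via the recursion for $f_M(z_1,z_2)=\langle 1|\widetilde{\mathcal{D}}_M(z_1)\widetilde{\mathcal{B}}_M(z_2)|\Omega\rangle$; your pair-annihilation identity $\mathcal{B}(v)\mathcal{B}(-v)|\Omega\rangle=0$ is cleaner and more structural, if it can be closed.

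The gap is the last step: you conclude $X=0$ from the eigenvalue equations by asserting that $|\Omega\rangle$ is the unique joint eigenvector with profile $(a(u),d(u))$, ``because the vacuum eigenvalue cannot be matched by any non-trivial on-shell Bethe state.'' That argument presupposes that every joint eigenvector of the family $\{\mathcal{A}(u)\}$ (or of $\tau(u)$) is an on-shell Bethe vector, i.e.\ completeness of the Bethe ansatz, which is not a citable fact here --- least of all for this degenerate non-Hermitian model, where $\tau(u)$ need not be diagonalizable and spectral arguments from the Hermitian setting do not transfer. As stated, the step ``$X$ lies in the $2$-particle sector, hence $X=0$'' does not follow. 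The gap is repairable without Bethe completeness: in the occupation basis $\mathcal{A}(u)$ preserves particle number and is triangular with respect to the order induced by particle transport (each auxiliary-line excursion destroys a particle at some site $j$ and creates one at a strictly later site, so no nontrivial excursion returns a configuration to itself); its diagonal entries on the $N$-particle sector are exactly $(1-\beta u)^{M-N}(1+\beta u)^{N}$, which differs from $a(u)=(1-\beta u)^{M}$ for $N\ge 1$ and $\beta\neq 0$. Hence $\mathcal{A}(u)-a(u)\,\mathrm{Id}$ is invertible on the $2$-particle sector for generic $u$ and $\beta\neq 0$, forcing $X=0$ there; the excluded locus $\beta=0$ is harmless since divisibility of the polynomial $F$ by the irreducible factor $z_j+z_k+2\beta z_jz_k$ only needs vanishing on a Zariski-dense subset of that hypersurface. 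Without some such argument, your proof is incomplete at its most essential point.
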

\begin{proof}
We prove this lemma by showing the following properties
for  $
\langle \{ m \}_{M,N}|\prod_{j=1}^N \widetilde{\mathcal{B}}(z_j)|\Omega \rangle
$: 
\begin{enumerate}
\item 
$
\langle \{ m \}_{M,N}|\prod_{j=1}^N \widetilde{\mathcal{B}}(z_j)|\Omega \rangle
$
is a polynomial of $\beta$ with highest degree $N(N-1)/2$. 
\item
$
\langle \{ m \}_{M,N}|\prod_{j=1}^N \widetilde{\mathcal{B}}(z_j)|\Omega \rangle
$
has $z_j+z_k+2 \beta z_j z_k$, $1 \le j < k \le N$ as factors. 
\end{enumerate}
We first show
$
\mathrm{deg}_\beta
\langle \{ m \}_{M,N}|\prod_{j=1}^N \widetilde{\mathcal{B}}(z_j)|\Omega \rangle
\le N(N-1)/2
$ by induction on $N$.
The case $N=1$ follows as an special case of the general fact
$
\mathrm{deg}_\beta
\langle \{ m \}_{M,{N+1}}|\widetilde{\mathcal{B}}(z)|\{ n \}_{M,N} \rangle
\le N
$
which can be seen easily from the definition of
the $L$-operator $\widetilde{\mathcal{L}}(z)$.
Next, let us assume $
\mathrm{deg}_\beta
\langle \{ m \}_{M,N}|\prod_{j=1}^N \widetilde{\mathcal{B}}(z_j)|\Omega \rangle
\le N(N-1)/2
$.
One can see
$
\mathrm{deg}_\beta
\langle \{ m \}_{M,N+1}|\prod_{j=1}^{N+1} \widetilde{\mathcal{B}}(z_j)|\Omega \rangle
\le (N+1)N/2
$
by  combining the assumption
$
\mathrm{deg}_\beta
\langle \{ m \}_{M,N}|\prod_{j=1}^N \widetilde{\mathcal{B}}(z_j)|\Omega \rangle
\le N(N-1)/2
$,
the fact
$
\mathrm{deg}_\beta
\langle \{ m \}_{M,{N+1}}|\widetilde{\mathcal{B}}(z)|\{ n \}_{M,N} \rangle
\le N
$
and the decomposition
\begin{align}
\langle \{ m \}_{M,N+1}|\prod_{j=1}^{N+1} \widetilde{\mathcal{B}}(z_j)|\Omega \rangle
=
\sum_{\{ n \}_{M,N}}
\langle \{ m \}_{M,{N+1}}|\widetilde{\mathcal{B}}(z_{N+1})|\{ n \}_{M,N}
\rangle
\langle \{ n \}_{M,N}|\prod_{j=1}^{N} \widetilde{\mathcal{B}}(z_j)|\Omega
\rangle.
\end{align}
Next, we show Property 2.
It is enough to show the case $N=2$.
The case for generic $N$ follows from
the commutativity ${[} \widetilde{\mathcal{B}}(z_j), \widetilde{\mathcal{B}}(z_k) {]}=0$ and the decomposition
\begin{align}
\langle \{ m \}_{M,N}|\prod_{j=1}^{N} \widetilde{\mathcal{B}}(z_j)|\Omega \rangle
=
\sum_{\{ n \}_{M,2}}
\langle \{ m \}_{M,N}|\prod_{j=3}^N \widetilde{\mathcal{B}}(z_j)|\{ n \}_{M,2}
\rangle
\langle \{ n \}_{M,2}| \widetilde{\mathcal{B}}(z_1) \widetilde{\mathcal{B}}(z_2)|\Omega
\rangle.
\end{align}
Now we show for the case $N=2$ by induction on $M$.
Let us denote the $A, B ,C, D$ operators
consisting of $M$ $L$-operators as $\widetilde{\mathcal{B}}_M(z)$ for example.
The case $M=2$ can be checked explicitly
$\langle 1,1|\widetilde{\mathcal{B}}_M(z_1) \widetilde{\mathcal{B}}_M(z_2)| \Omega \rangle =4z_1 z_2(z_1+z_2+2 \beta z_1 z_2)$
.
Let us assume that
$\langle x_1, x_2|\widetilde{\mathcal{B}}_M(z_1) \widetilde{\mathcal{B}}_M(z_2)| \Omega \rangle $
has $z_1+z_2+2 \beta z_1 z_2$ as a factor.
We examine
$\langle x_1, x_2|\widetilde{\mathcal{B}}_{M+1}(z_1) 
\widetilde{\mathcal{B}}_{M+1}(z_2)| \Omega \rangle $.
One can easily show by its graphical description that
\begin{align}
\langle x_1, x_2|\widetilde{\mathcal{B}}_{M+1}(z_1) 
\widetilde{\mathcal{B}}_{M+1}(z_2)| \Omega \rangle
=(z_1 z_2)^{x_1-1}
\langle x_1, x_2|\widetilde{\mathcal{B}}_{x_2-x_1+1}(z_1)
\widetilde{\mathcal{B}}_{x_2-x_1+1}(z_2)| \Omega \rangle.
\end{align}
If $x_1 \neq 1$ or $x_2 \neq M+1$,
$\langle x_1, x_2|\widetilde{\mathcal{B}}_{M+1}(z_1) 
\widetilde{\mathcal{B}}_{M+1}(z_2)| \Omega \rangle$ has
$z_1+z_2+2 \beta z_1 z_2$ as a factor by assumption.

We examine the remaining case $x_1=1, x_{M+1}=M+1$.
We show
\begin{align}
\langle 1, M+1|\widetilde{\mathcal{B}}_{M+1}(z_1) 
\widetilde{\mathcal{B}}_{M+1}(z_2)| \Omega \rangle
=4 z_1 z_2 \frac{z_1^M-z_2^M}{z_1-z_2} (z_1+z_2+2 \beta z_1 z_2).
\label{specialcase}
\end{align}
By graphical description, one sees
\begin{align}
\langle x_1, x_2|\widetilde{\mathcal{B}}_{M+1}(z_1) 
\widetilde{\mathcal{B}}_{M+1}(z_2)| \Omega \rangle
=2z_1 \{f_M(z_1,z_2)+2 z_2^{M+1}(1+2 \beta z_1)  \},
\label{intermediaterelation}
\end{align}
where $f_M(z_1,z_2)=\langle 1|\widetilde{\mathcal{D}}_M(z_1)
\widetilde{\mathcal{B}}_M(z_2) | \Omega \rangle$.
Again, we use its graphical representation to derive the
following recursive relation
\begin{align}
f_M(z_1,z_2)=z_1 \{f_{M-1}(z_1,z_2)+4 z_2^M(1+ \beta z_1) \},
\label{recursivef}
\end{align}
with the initial condition
\begin{align}
f_2(z_1,z_2)=2z_1 z_2(z_1+2 z_2+2 \beta z_1 z_2).
\label{initialf}
\end{align}
We can show by induction that
\begin{align}
f_M(z_1,z_2)=2 z_2 \frac{z_1^M-z_2^M}{z_1-z_2}(z_1+z_2+2 \beta z_1 z_2)
-2 z_2^{M+1}(1+2 \beta z_1), \label{use}
\end{align}
solves the recursive relation \eqref{recursivef} and
the initial condition \eqref{initialf}.
Hence, the expression \eqref{specialcase} follows from
\eqref{intermediaterelation} and \eqref{use}.
We thus have shown by induction that \\
$\langle x_1, x_2|\widetilde{\mathcal{B}}_{M+1}(z_1) 
\widetilde{\mathcal{B}}_{M+1}(z_2)| \Omega \rangle$
has
$z_1+z_2+2 \beta z_1 z_2$ as a factor, and Property 2 is proved.

From Property 2 we have,
$
\mathrm{deg}_\beta
\langle \{ m \}_{M,N}|\prod_{j=1}^N \widetilde{\mathcal{B}}(z_j)|\Omega \rangle
\ge N(N-1)/2
$.
Together with \\
$
\mathrm{deg}_\beta
\langle \{ m \}_{M,N}|\prod_{j=1}^N \widetilde{\mathcal{B}}(z_j)|\Omega \rangle
\le N(N-1)/2
$ which we proved before, we have Property 1.
\end{proof}
From Lemma \ref{independence}, one can examine
\begin{align}
\frac{\langle \{ m \}_{M,N}|\prod_{j=1}^N \widetilde{\mathcal{B}}(z_j)|\Omega \rangle}{\prod_{1 \le j < k \le N}(z_j+z_k+2 \beta z_j z_k)},
\end{align}
by dealing the case $\beta=0$.
\begin{lemma}
We have
\begin{align}
\left.
\frac{\langle \{ m \}_{M,N}|\prod_{j=1}^N \widetilde{\mathcal{B}}(z_j)|\Omega \rangle}
{\prod_{1 \le j < k \le N}(z_j+z_k+2 \beta z_j z_k)}\right|_{\beta=0}
=2^N \prod_{j=1}^N z_j s_\lambda(\{ \bs{z} \}) \label{evaluation}.
\end{align}
\end{lemma}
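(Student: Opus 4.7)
The plan is to expand the wavefunction at $\beta = 0$ by inserting complete sets of states, identify each factor via \eqref{skewspecial} as a one-variable skew Schur $Q$-polynomial, and then use the classical branching rule and symmetrization formula to convert the result into a Schur polynomial. At $\beta = 0$ the prefactor $(1+\beta z)^M$ equals $1$, so $\widetilde{\mathcal{B}}(z)|_{\beta=0} = \mathcal{B}(v)|_{\beta=0}$ with $z = v$. I would first insert $N-1$ resolutions of the identity between successive $\widetilde{\mathcal{B}}$ operators. Proposition~\ref{admissible} forces admissibility of adjacent configurations, equivalently interlacing of the corresponding strict partitions, so
\begin{align*}
\langle \{m\}_{M,N}|\prod_{k=1}^N \widetilde{\mathcal{B}}(z_k)|\Omega\rangle\bigg|_{\beta=0}
= \sum_{x^{(N)} \succ x^{(N-1)} \succ \cdots \succ x^{(0)} = \phi} \prod_{k=1}^N \langle x^{(k)}|\mathcal{B}(v_k)|x^{(k-1)}\rangle\bigg|_{\beta=0},
\end{align*}
with $x^{(N)}$ the strict partition encoding $\{m\}_{M,N}$.

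By \eqref{skewspecial}, each factor is the one-variable skew Schur $Q$-polynomial $Q_{x^{(k)}/x^{(k-1)}}(z_k) = 2^{\#(x^{(k)}|x^{(k-1)})} z_k^{|x^{(k)}|-|x^{(k-1)}|}$, so the iterated branching rule for Schur $Q$-polynomials collapses the sum to $Q_{x^{(N)}}(z_1, \ldots, z_N)$. Writing $x^{(N)} = \nu$ with $\nu_j = \lambda_j + N - j + 1 \ge 1$ (so $\nu$ is strict of length exactly $N$), I would invoke Macdonald's symmetrization formula at $t = -1$, which for a strict partition of length equal to the number of variables reads
\begin{align*}
P_\nu(\bs{z}) = \prod_{1 \le i < j \le N}\frac{z_i + z_j}{z_i - z_j}\det(z_i^{\nu_j}),\qquad Q_\nu(\bs{z}) = 2^N P_\nu(\bs{z}).
\end{align*}
Extracting a factor $\prod_k z_k$ from the determinant converts $\det(z_i^{\nu_j})$ into $\prod_k z_k \cdot \det(z_i^{\lambda_j + N - j})$, and recognizing the latter via the Jacobi formula \eqref{Schur} as $\prod_k z_k \cdot s_\lambda(\bs{z}) \prod_{i<j}(z_i - z_j)$ yields
\begin{align*}
Q_\nu(\bs{z}) = 2^N \prod_{k=1}^N z_k \prod_{1 \le i < j \le N}(z_i + z_j) \, s_\lambda(\bs{z}).
\end{align*}
Dividing both sides by $\prod_{i<j}(z_i + z_j)$, which is the $\beta = 0$ limit of the denominator on the left-hand side of \eqref{evaluation}, produces the claimed identity.

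The main obstacle is justifying the branching step cleanly: one must check that the interlacing relation $\succ$ arising in the algebraic Bethe ansatz matches the shifted horizontal-strip condition appearing in the classical branching rule for Schur $Q$-functions, so that the combinatorial weights $2^{\#(x^{(k)}|x^{(k-1)})} z_k^{|x^{(k)}|-|x^{(k-1)}|}$ really are the one-variable skew $Q$-polynomials. Both conditions amount to the same componentwise inequality on strict partitions with exactly one added part, so the identification is routine but should be stated explicitly; the remaining manipulations are just the Weyl-type determinant shift from $\nu$-exponents to $\lambda$-exponents.
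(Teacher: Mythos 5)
Your proposal is correct and follows essentially the same route as the paper: both identify the single-operator matrix elements at $\beta=0$ with one-variable skew Schur $Q$-polynomials via \eqref{skewspecial}, collapse the telescoping sum by the branching (addition) rule to $Q_{x}(\bs{z})$, and then convert $Q_x$ into $2^N\prod_j z_j\prod_{j<k}(z_j+z_k)\,s_\lambda(\bs{z})$ via the symmetrization/determinant identity. The only difference is presentational — you make explicit the determinant shift from $\nu$-exponents to $\lambda$-exponents and flag the matching of the interlacing condition with the shifted horizontal-strip condition, both of which the paper leaves implicit.
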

\begin{proof}
To prove the lemma is equivalent to show
\begin{align}
\langle \{ m \}_{M,N}|\prod_{j=1}^N \widetilde{\mathcal{B}}(z_j)|\Omega \rangle|_{\beta=0}
=2^N \prod_{j=1}^N z_j \prod_{1 \le j < k \le N}(z_j+z_k)
s_\lambda({ \bs{z} })=Q_x({\bs{z} }). \label{whattoprove}
\end{align}
where $Q_x({\bs{z} })$ is the Schur $Q$-function
\begin{align}
Q_x(\bs{z})
=2^N \sum_{w \in S_N}w
\left(\prod_{j=1}^N z_j^{x_j} \prod_{1 \le j < k \le N} 
\frac{z_j+z_k}{z_j-z_k} \right),
\end{align}
where $\bs{z}=\{z_1,\dots,z_N \}$ is a set of variables
and $x$ denotes a strict Young diagram
$x=(x_1,x_2,\dots,x_N)$
with strictly decreasing non-negative integers
$x_1 > x_2 > \cdots > x_N \ge 0$.
\eqref{whattoprove} follows from the fact 
\eqref{skewspecial} derived in the previous section
that the matrix element
of a single $B$-operator at $\beta=0$
is nothing but the skew Schur $Q$-function
\begin{align}
\langle \{ m \}_{M,N+1}|\widetilde{\mathcal{B}}(z)|\{ n \}_{M,N}
\rangle|_{\beta=0}
&=\langle \{ m \}_{M,N+1}|\mathcal{B}(v)|\{ n \}_{M,N} \rangle|_{\beta=0}
\nonumber \\
&=2^{\#(y|x)}
z^{\sum_{j=1}^{N+1} y_j-\sum_{j=1}^N x_j}
\nonumber \\
&=Q_{y/x}(z).
\end{align}
\eqref{whattoprove} follows as a consequence of
the addition formula for the skew Schur $Q$-function
\begin{align}
\langle \{ m \}_{M,N+2}|&\widetilde{\mathcal{B}}(z_1)
\widetilde{\mathcal{B}}(z_2)|\{ n \}_{M,N} \rangle|_{\beta=0} \nonumber \\
&=
\sum_{\ell}
\langle \{ m \}_{M,N+2}|\widetilde{\mathcal{B}}(z_1)|\{ \ell \}_{M,N+1}
\rangle|_{\beta=0}
\langle \{ \ell \}_{M,N+1}|\widetilde{\mathcal{B}}(z_2)|\{ n \}_{M,N}
\rangle|_{\beta=0} \nonumber \\
&=\sum_{w}Q_{y/w}(z_1)Q_{w/x}(z_2) \nonumber \\
&=Q_{y/x}(z_1,z_2).
\end{align}
\end{proof}
Combining Lemma \ref{independence} and \eqref{evaluation},
we have \eqref{equivalentexpressionwavefunction},
and the proof of \eqref{wavefunctiontheorem} is completed.
\end{proof}

\section{Combinatorial formulae for the Schur polynomials}
By combining the analysis of the partition functions
in the previous sections, we obtain combinatorial formulae for
the Schur polynomials.
\begin{theorem}
We have the following combinatorial formula for the Schur polynomials
\begin{align}
s_\lambda(\bs{z})=&
\frac{1}{\prod_{1 \le j < k \le N}(z_j+z_k+2 \beta z_j z_k)}
\sum_{x^{(N)} \succ x^{(N-1)} \succ \dots \succ x^{(0)}=\phi} \prod_{k=1}^N
\Bigg\{z_k^{\sum_{j=1}^k x_j^{(k)}-\sum_{j=1}^{k-1} x_j^{(k-1)}-1} \nonumber \\
\times&\left( \frac{2(1+\beta z_k)}{1+2 \beta z_k} \right)^{\#(x^{(k)}|x^{(k-1)})-1}
\prod_{j=1}^{k-1}
\left(
1+2 \beta z_k(1-\delta_{x_j^{(k-1)} x_{j+1}^{(k)}})
\right)
\Bigg\}, \label{combinatorialformula}
\end{align}
where $\beta$ is an arbitrary parameter.
$x^{(k)}=(x_1^{(k)},\dots,x_k^{(k)})$, $k=0,1,\dots,N$ are strict partitions
satisfying the interlacing relations
$x^{(N)} \succ x^{(N-1)} \succ \dots \succ x^{(0)}=\phi$,
and
$x^{(N)}$ is fixed by the Young diagram
$\lambda=(\lambda_1,\dots,\lambda_N)$ as
$\lambda_j=x_j^{(N)}-N+j-1$.
\end{theorem}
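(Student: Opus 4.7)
The plan is to evaluate the wavefunction $\langle \{m\}_{M,N}|\prod_{k=1}^N \mathcal{B}(v_k)|\Omega\rangle$ in two different ways and equate them. One expression is already provided by the wavefunction theorem of Section 5, which gives the wavefunction as a Schur polynomial times a completely explicit prefactor. The second expression will come from decomposing the multiple $\mathcal{B}$-action into a product of single-$\mathcal{B}$ matrix elements whose closed form was worked out in Section 4.

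First I would insert a resolution of the identity in the orthonormal basis $|\{n\}_{M,k}\rangle$ between each pair of $\mathcal{B}$-operators, writing
\begin{align*}
\langle \{m\}_{M,N}|\prod_{k=1}^N \mathcal{B}(v_k)|\Omega\rangle
=\sum_{\{n^{(1)}\},\dots,\{n^{(N-1)}\}}
\prod_{k=1}^N \langle \{n^{(k)}\}_{M,k}|\mathcal{B}(v_k)|\{n^{(k-1)}\}_{M,k-1}\rangle,
\end{align*}
with the boundary conditions $\{n^{(N)}\}=\{m\}_{M,N}$ and $\{n^{(0)}\}=\emptyset$. By the admissibility constraint on the single-$\mathcal{B}$ matrix element and the equivalence of the admissibility relation with the interlacing relation on strict partitions (Proposition~4.3), the sum restricts to chains $x^{(N)}\succ x^{(N-1)}\succ\cdots\succ x^{(0)}=\phi$ of strict partitions, where $x^{(k)}$ is the partition associated with $\{n^{(k)}\}_{M,k}$ and $x^{(N)}$ is fixed by $\lambda$ through $\lambda_j=x_j^{(N)}-N+j-1$. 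I would then substitute the explicit matrix element from the key proposition of Section 4, yielding for the $k$-th factor
\begin{align*}
\frac{1+2\beta z_k}{(1+\beta z_k)^{M+1}}
\Bigl(\tfrac{2(1+\beta z_k)}{1+2\beta z_k}\Bigr)^{\#(x^{(k)}|x^{(k-1)})}
z_k^{\sum_j x_j^{(k)}-\sum_j x_j^{(k-1)}}
\prod_{j=1}^{k-1}\bigl(1+2\beta z_k(1-\delta_{x_j^{(k-1)} x_{j+1}^{(k)}})\bigr).
\end{align*}

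Next I would equate this expansion with the closed form from Section~5, namely $\frac{2^N \prod_j z_j \prod_{j<k}(z_j+z_k+2\beta z_j z_k)}{\prod_j (1+\beta z_j)^M}\, s_\lambda(\bs{z})$, and solve for $s_\lambda(\bs{z})$. All the pieces depending on $M$ cancel after pulling $\prod_k \frac{1+2\beta z_k}{(1+\beta z_k)^{M+1}}$ out of the sum: the $(1+\beta z_k)^{M+1}$ denominator combines with the $(1+\beta z_k)^{-M}$ in the inverse of the Section~5 prefactor, and a single factor $\frac{2(1+\beta z_k)}{1+2\beta z_k}\cdot z_k$ is absorbed into each factor in the product so as to shift the exponent $\#(x^{(k)}|x^{(k-1)})\mapsto \#(x^{(k)}|x^{(k-1)})-1$ and the exponent of $z_k$ down by one, matching exactly the form in \eqref{combinatorialformula}. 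The leftover denominator is precisely $\prod_{j<k}(z_j+z_k+2\beta z_j z_k)$.

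The proof is essentially bookkeeping once both evaluations are in hand, so the main obstacle is simply checking that the prefactor cancellations produce exactly the combinatorial weight written in the statement, in particular verifying that the power of $(1+\beta z_k)/(1+2\beta z_k)$ and the power of $z_k$ each decrease by one per layer and that no stray $M$-dependent factors remain. Both checks are routine, and the independence of the resulting formula on the auxiliary cutoff $M$ is automatic since the wavefunction identity used as input is valid for all $M\ge \lambda_1+N$.
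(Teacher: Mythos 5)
Your proposal is correct and follows essentially the same route as the paper: decompose the wavefunction by inserting complete sets of states between consecutive $\mathcal{B}$-operators, restrict the sum to interlacing chains of strict partitions via the admissibility condition, substitute the single-$\mathcal{B}$ matrix element \eqref{tminusoneskew} on one side and the closed form \eqref{wavefunctiontheorem} on the other, and solve for $s_\lambda(\bs{z})$. Your explicit tracking of the prefactor cancellations (the $(1+\beta z_k)^{M\pm1}$ powers and the absorption of one factor of $2z_k(1+\beta z_k)/(1+2\beta z_k)$ per layer) is exactly the "simplifying" step the paper leaves implicit, and it checks out.
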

\begin{proof}
We decompose the wavefunction as
\begin{align}
&\bra \{ n \}_{M,N}|\Psi(\{v \}_N) \ket \nn \\
&\quad =\sum_{\{m^{(0)}\},\dots, \{m^{(N-1)}\}}\bra \{ n \}_{M,N}| \prod_{j=1}^N
\left\{\mathcal{B}(v_j)|\{ m^{(N-j)}\}_{M,N-j}\ket \bra \{ m^{(N-j)}\}_{M,N-j}|\right\}|\Omega\ket. \label{decompositionofwavefunction}
\end{align}
We insert the expression
\eqref{wavefunctiontheorem}
in the left hand side of \eqref{decompositionofwavefunction} on one hand.
On the other hand, the right hand side of \eqref{decompositionofwavefunction}
can be expressed using the evaluation of matrix elements
\eqref{tminusoneskew}.
Combining the two expressions and simplifying gives the combinatorial
expression for the Schur polynomials \eqref{combinatorialformula}.
\end{proof}

\begin{example}
Let us check the case $N=2$, $\lambda=(1,0)$.
$x^{(2)}$ is fixed as $x^{(2)}=(1,0)+(2,1)=(3,1)$ and $x^{(0)}=\phi$.
$x^{(1)}$ satisfying the interlacing relation
$x^{(2)} \succ x^{(1)} \succ x^{(0)}$ has three cases
$x^{(1)}=(1)$, $x^{(1)}=(2)$ and $x^{(1)}=(3)$.
Each term in the sum of the right hand side of \eqref{combinatorialformula}
has the contribution
\begin{align}
&z_2^{4-1-1} \left( \frac{2(1+\beta z_2)}{1+2 \beta z_2} \right)^{1-1} z_1^{1-1}, \\
&z_2^{4-2-1} \left( \frac{2(1+\beta z_2)}{1+2 \beta z_2} \right)^{2-1} (1+2 \beta z_2) z_1^{2-1},
\\
&z_2^{4-3-1} \left( \frac{2(1+\beta z_2)}{1+2 \beta z_2} \right)^{1-1} (1+2 \beta  z_2) z_1^{3-1},
\end{align}
which sums up to $(z_1+z_2+2 \beta z_1 z_2)(z_1+z_2)$.
Dividing by $z_1+z_2+2 \beta z_1 z_2$, we have $z_1+z_2$,
which is nothing but $s_{(1,0)}(z_1,z_2)$.
\end{example}

\begin{proposition} The following well-known
identity holds true for the Schur polynomials.
\begin{align}
&\sum_{\lambda \subseteq
(M-N)^N}s_\lambda(\bs{z})
s_{\lambda^\vee}(\bs{y})
=\prod_{1 \le j < k \le N}\frac{1}{(z_k-z_j)(y_j-y_k)}
\mathrm{det}_N \left[ \frac{z_j^M-y_j^M}{z_j -y_k} \right].
\label{cauchyidentity}
\end{align}
\end{proposition}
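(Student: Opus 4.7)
The plan is to evaluate the scalar product $\langle \Psi(\{u\}_N)|\Psi(\{v\}_N)\rangle$ in two different ways and equate the results. First, we use Theorem \ref{scalarthm}, which gives the determinant form in terms of the matrix $Q_{jk} = (a(u_j)d(v_k)-d(u_j)a(v_k))/(v_k-u_j)$. Second, we insert the complete set of basis states $|\{m\}_{M,N}\rangle$ between the $\mathcal{C}$- and $\mathcal{B}$-strings and use the wavefunction formulas \eqref{wavefunctiontheorem} and \eqref{dualwavefunctiontheorem}, which express each factor as a Schur polynomial times an explicit prefactor.

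Since the target identity \eqref{cauchyidentity} contains no $\beta$, it suffices to work at $\beta = 0$, at which point the variables $z_j = (v_j^{-1}-\beta)^{-1}$ and $y_j = (u_j^{-1}-\beta)^{-1}$ reduce to $z_j = v_j$ and $y_j = u_j$, and the functions $a(v)$ and $d(v)$ reduce to $1$ and $v^M$. Concretely, the basis insertion yields
\begin{align}
\langle \Psi(\{u\}_N)|\Psi(\{v\}_N)\rangle\bigr|_{\beta=0}
=2^N\prod_{j=1}^N v_j\prod_{1\le j<k\le N}(u_j+u_k)(v_j+v_k)\sum_{\lambda\subseteq(M-N)^N}s_{\lambda^\vee}(\bs{u})\,s_\lambda(\bs{v}),\nonumber
\end{align}
where the constraint $\lambda \subseteq (M-N)^N$ arises automatically from the finite lattice and the correspondence $\lambda_j = x_j - N + j - 1$. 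On the other hand, Theorem \ref{scalarthm} at $\beta = 0$ gives the same scalar product as
\begin{align}
\prod_{j=1}^N(2v_j)\prod_{1\le j<k\le N}\frac{(u_j+u_k)(v_j+v_k)}{(u_j-u_k)(v_k-v_j)}\,\det_N\!\left[\frac{v_k^M-u_j^M}{v_k-u_j}\right].\nonumber
\end{align}
Dividing out the common prefactor and relabeling $\bs{u}\mapsto\bs{y}$, $\bs{v}\mapsto\bs{z}$ gives \eqref{cauchyidentity} directly.

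The manipulations are essentially routine prefactor bookkeeping; the only substantive point is the cancellation of the factors $\prod_j(1+\beta z_j)^{-M}$ and $\prod_j(1+\beta y_j)^{-(M-1)}$ coming from the wavefunctions against the $\beta$-dependence of $a(v)=(1-\beta v)^M$ in the determinant formula. Working at $\beta = 0$ sidesteps this entirely, and by the uniqueness of the scalar product (both sides are the same object) the resulting $\beta$-free identity is the desired Cauchy formula. The main conceptual step is thus the identification of the basis-insertion expansion of $\langle\Psi|\Psi\rangle$ with the Cauchy sum via the wavefunction theorem; all else is algebraic simplification.
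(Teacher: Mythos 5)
Your proposal is correct and follows essentially the same route as the paper: insert the completeness relation into $\langle \Psi(\{u\}_N)|\Psi(\{v\}_N)\rangle$, identify the two factors with the wavefunction formulas \eqref{wavefunctiontheorem} and \eqref{dualwavefunctiontheorem}, and equate with the determinant formula \eqref{generalscalar}. The only difference is that you specialize to $\beta=0$ at the outset, which is a legitimate and slightly cleaner shortcut (the paper keeps $\beta$ general and lets the $(1+\beta z_j)$-type prefactors cancel after the change of variables), since the target identity is $\beta$-free and the variables $z_j$, $y_j$ remain arbitrary at $\beta=0$.
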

\begin{proof}
First, substituting the completeness relation, one decomposes 
the scalar product as
\begin{align}
\bra \Psi(\{u\}_N)|\Psi(\{v\}_N)\ket=\sum_{\{ n \}_{M,N}}
\bra \Psi(\{u\}_N)|\{ n \}_{M,N} \ket\bra \{ n \}_{M,N} |\Psi(\{v\}_N)\ket.
\end{align}
Then substituting  the determinant representation for the 
scalar product \eqref{generalscalar} into the RHS of the
above and utilizing the relations \eqref{wavefunctiontheorem}
and \eqref{dualwavefunctiontheorem} yields the Cauchy identity 
\eqref{cauchyidentity} by changing the variables
from $u_j$ and $v_j$ to $y_j=(u_j^{-1}-\beta)^{-1}$ and
$z_j=(v_j^{-1}-\beta)^{-1}$ respectively.
\end{proof}

\section{Enumeration of alternating sign matrices}
In this section, we make an application of the
domain wall boundary partition function
to the enumeration of alternating sign matrices.
See \cite{Bre,Ku,Ku2,Ok2,CP,CP2,BCS,AR,BDZ,Be}
for example of a huge literature on the relation between the
enumeration of alternating sign matrices and integrable vertex models.
The presentation below for the explanation of the relation
between alternating sign matrices and the six-vertex model
follows the lines of \cite{BDZ}.
We take the
homogeneous limit of the domain wall boundary partition function
\eqref{domainwallboundary} or \eqref{dwbpf}
\begin{align}
Z(\{ v \}_n|\{ 1 \})=\langle 1^n | \prod_{k=1}^n \mathcal{B}(v_k)|\Omega \rangle
=\prod_{j=1}^n (2v_j)
\prod_{1 \le j<k \le n}(v_j+v_k). \label{partitionfunctiontoasm}
\end{align}
An algebraic analytic proof of \eqref{domainwallboundary}
is given in  Appendix.
It can also be obtained as a special case of the wavefunction
\eqref{wavefunctiontheorem}.

As a corollary of the domain wall boundary partition function,
we derive a simple expression for a special case of the generating function
of alternating sign matrices.
\begin{definition}
Alternating sign matrices are square matrices
with the following properties: 
\begin{enumerate}
\item each entry is either 0, 1 or $-1$. 
\item there is at least one nonzero entry in each row and column. 
\item  the entries in each row and column sum to 1.
\end{enumerate}
\end{definition}
To satisfy the above conditions, the nonzero entries must 
alternate in sign along  each row and column.
\begin{example}
For $n=3$, there are 7 alternating sign matrices:
\begin{align}
\mathrm{ASM}(3)&=
\left\{
\begin{pmatrix}
1 & 0 & 0 \\
0 & 1 & 0 \\
0 & 0 & 1
\end{pmatrix},
\begin{pmatrix}
0 & 0 & 1 \\
0 & 1 & 0 \\
1 & 0 & 0
\end{pmatrix},
\begin{pmatrix}
1 & 0 & 0 \\
0 & 0 & 1 \\
0 & 1 & 0
\end{pmatrix},
\right. \nn \\
& \qquad \qquad \left.
\begin{pmatrix}
0 & 0 & 1 \\
1 & 0 & 0 \\
0 & 1 & 0
\end{pmatrix},
\begin{pmatrix}
0 & 1 & 0 \\
1 & 0 & 0 \\
0 & 0 & 1
\end{pmatrix},
\begin{pmatrix}
0 & 1 & 0 \\
0 & 0 & 1 \\
1 & 0 & 0
\end{pmatrix},
\begin{pmatrix}
0 & 1 & 0 \\
1 & -1 & 1 \\
0 & 1 & 0
\end{pmatrix}
\right\}.
\label{ASM}
\end{align}
\end{example}
\begin{definition}
Let $\mathrm{ASM}(n)$ be the set consisting of all
$n \times n$ alternating sign matrices.
For $A \in \mathrm{ASM}(n)$,
let us define
\begin{align}
\nu(A):=\sum_{\substack{1 \le i < k \le n \\ 1 \le \ell \le j \le n}}
A_{ij}A_{k \ell},
\end{align}
$\mu(A)$ as the number of $-1$'s in $A$
and $\rho(A)$ as the number of 0's to left of 1 in the first row of $A$.
The generating function of the alternating sign matrices is defined as
\begin{align}
Z_{\mathrm{ASM}}(n,x,y,z)=\sum_{A \in \mathrm{ASM}(n)}x^{\nu(A)}y^{\mu(A)}z^{\rho(A)}.
\end{align}
\end{definition}
Each element of an alternating sign matrix $\mathrm{ASM}(n)$
has one-to-one correspondence with a particular
configuration which makes a non-zero contribution to
the domain wall boundary partition function
of the six-vertex model.
To simplify the explanation,
we identify a particular configuration
of the domain wall boundary partition function
at the coordinate $(j,k)$
with the matrix elements
${}_a \langle *| {}_j \langle * | \mathcal{L}_{a j}(v_k)
|* \rangle_a | * \rangle_j$.
\begin{definition}
Let $\mathrm{6VDW}(n)$ be the set of all particular
configurations making non-zero contributions
to the domain wall boundary partition function
of the six-vertex model.
For $C \in \mathrm{6VDW}(n)$,
Let $\nu(C)$ be the number of vertex configurations
${}_a \langle 1| {}_j \langle 1 | \mathcal{L}_{a j}(v_k)
|1 \rangle_a | 1 \rangle_j$, $j,k=1,\dots,n$ in $C$,
$\mu(C)$ the number of vertex configurations
${}_a \langle 1| {}_j \langle 0 | \mathcal{L}_{a j}(v_k)
|0 \rangle_a | 1 \rangle_j$, $j,k=1,\dots,n$ in $C$,
$\rho(C)$ the number of vertex configurations
${}_a \langle 1| {}_j \langle 1 | \mathcal{L}_{a j}(v_n)
|1 \rangle_a | 1 \rangle_j$, $j=1,\dots,n$ in $C$.
\end{definition}

\begin{theorem} \label{onetoone}
\cite{EKLP}
There is a bijection
between the set of alternating sign matrices
$\{ A \in \mathrm{ASM}(n)  \ | \ \nu(A)=p,
\mu(A)=m, \rho(A)=k \}$
and the set of configurations
making non-zero contributions to the domain wall
boundary partition function of the six-vertex model
$\{ C \in \mathrm{6VDW}(n) \ | \ \nu(C)=p,
\mu(C)=m, \rho(C)=k \}$ by
identifying the matrix elements 1 with
${}_a \langle 0| {}_j \langle 1 | \mathcal{L}_{a j}(v_k)
|1 \rangle_a | 0 \rangle_j$,
$-1$ with ${}_a \langle 1| {}_j \langle 0 | \mathcal{L}_{a j}(v_k)
|0 \rangle_a | 1 \rangle_j$
and
0 with one of the rest of the four non-zero configurations.
\end{theorem}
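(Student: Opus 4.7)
The plan is to exhibit the bijection explicitly using height functions and then verify that each of the three statistics is preserved. This is the classical construction of \cite{EKLP}; I describe how I would implement it in the present notation.

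First, given $A\in\mathrm{ASM}(n)$, I would define on each site $(i,j)$ of an $n\times n$ grid the partial sums $h^v_{i,j}=\sum_{k\le i}A_{kj}$ and $h^h_{i,j}=\sum_{\ell\le j}A_{i\ell}$, and use them to label the four edges incident to $(i,j)$. The alternating-sign property forces $h^v_{i,j},h^h_{i,j}\in\{0,1\}$, while the row- and column-sum conditions translate directly into the domain wall boundary on the outer edges. Around $(i,j)$ the identity $A_{ij}=h^v_{i,j}-h^v_{i-1,j}=h^h_{i,j}-h^h_{i,j-1}$ shows that $A_{ij}=+1$ pins down the specific vertex type ${}_a\langle 0|{}_j\langle 1|\mathcal{L}_{aj}(v_k)|1\rangle_a|0\rangle_j$, $A_{ij}=-1$ pins down its reverse ${}_a\langle 1|{}_j\langle 0|\mathcal{L}_{aj}(v_k)|0\rangle_a|1\rangle_j$, and $A_{ij}=0$ selects exactly one of the four remaining ice-rule configurations, uniquely determined by $(h^v_{i-1,j},h^h_{i,j-1})$. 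This yields a configuration $C\in\mathrm{6VDW}(n)$. The inverse is obvious: given $C$, read off each entry as the change in either partial sum across the corresponding vertex.

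Next, I would check the three statistics. The identity $\mu(A)=\mu(C)$ is immediate, since $-1$ entries correspond bijectively to the "source" vertices. For $\rho$, the first row of an ASM contains a single $+1$ (the alternating condition forbids earlier $-1$'s), and under the bijection each leading $0$ corresponds to a ${}_a\langle 1|{}_j\langle 1|\mathcal{L}_{aj}(v_n)|1\rangle_a|1\rangle_j$ vertex in the boundary row carrying the parameter $v_n$, so the counts match. For $\nu(A)=\sum_{i<k,\,\ell\le j}A_{ij}A_{k\ell}$ I would substitute $A_{ij}=h^v_{i,j}-h^v_{i-1,j}$ and $A_{k\ell}=h^h_{k,\ell}-h^h_{k,\ell-1}$ and telescope in $i$ and $\ell$; after collecting terms the sum collapses to the number of sites at which both the incoming horizontal edge and the outgoing vertical edge carry the label $1$, which is precisely the census of ${}_a\langle 1|{}_j\langle 1|\mathcal{L}_{aj}(v_k)|1\rangle_a|1\rangle_j$ vertices in $C$.

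The main obstacle is the preservation of $\nu$: the bijection itself and the identities for $\mu$ and $\rho$ are local bookkeeping, but matching the bilinear statistic $\nu$ requires a global manipulation of the double sum. If the direct telescoping proves awkward, I would fall back on induction on $n$, peeling off the bottom row of $A$ along with the corresponding horizontal strip of $C$ and verifying that both $\nu(A)$ and $\nu(C)$ change by exactly the number of new "doubly-occupied" vertices introduced in that strip; this inductive variant is essentially the argument of \cite{EKLP}.
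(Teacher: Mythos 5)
The paper gives no proof of this statement at all: it is quoted verbatim from \cite{EKLP}, so there is no internal argument to compare yours against. Your sketch is exactly the standard height-function construction from that reference, and it is sound. Two remarks on the substance. First, you need not fear the $\nu$ step or fall back on induction: grouping the double sum by the indices $(k,j)$ gives
\begin{align*}
\nu(A)=\sum_{k,j}\Big(\sum_{i<k}A_{ij}\Big)\Big(\sum_{\ell\le j}A_{k\ell}\Big)
=\sum_{k,j}h^v_{k-1,j}\,h^h_{k,j},
\end{align*}
and since the ice rule at $(k,j)$ reads $h^v_{k,j}+h^h_{k,j-1}=h^v_{k-1,j}+h^h_{k,j}$, the product $h^v_{k-1,j}h^h_{k,j}=1$ forces the other two edge labels to equal $1$ as well; the sum therefore counts exactly the vertices of type ${}_a\langle 1|{}_j\langle 1|\mathcal{L}_{aj}(v_k)|1\rangle_a|1\rangle_j$, which is the paper's $\nu(C)$. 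Second, the only place where real care is needed is the orientation bookkeeping: which ASM row corresponds to the lattice row carrying $v_n$, and whether the partial sums are taken from the top/left or bottom/right, are fixed by the paper's (unreproduced) figures, and with the ``wrong'' choice the leading zeros of the first row would land on the all-$0$ rather than the all-$1$ vertex, and the two $c$-type vertices would swap roles in the $\pm1$ identification. You should state your edge-labelling convention explicitly and check it against the definition of $\mathcal{B}$ as ${}_a\langle 0|\cdots|1\rangle_a$ and the boundary vectors $\langle 1^n|$, $|0^n\rangle$ before asserting the $\rho$ and $\mu$ identities; once that is pinned down, every step of your argument goes through.
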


\begin{proposition} \label{numberoftypes}
\cite{EKLP}
Let $\# {}_a \langle *| {}_j \langle * | \mathcal{L}_{a j}(v_k)
|* \rangle_a | * \rangle_j$ be
the number of vertex configurations \\
${}_a \langle *| {}_j \langle * | \mathcal{L}_{a j}(v_k)
|* \rangle_a | * \rangle_j$, $j,k=1,\dots,n$
in a configuration $C \in \mathrm{6VDW}(n)$.
\begin{align}
\# {}_a \langle 0| {}_j \langle 0 | \mathcal{L}_{a j}(v_k)
|0 \rangle_a | 0 \rangle_j
&=\nu(C), \\
\# {}_a \langle 0| {}_j \langle 1 | \mathcal{L}_{a j}(v_k)
|0 \rangle_a | 1 \rangle_j
&=n(n-1)/2-\nu(C)-\mu(C), \\
\# {}_a \langle 0| {}_j \langle 1 | \mathcal{L}_{a j}(v_k)
|1 \rangle_a | 0 \rangle_j
&=\mu(C)+n, \\
\# {}_a \langle 1| {}_j \langle 0 | \mathcal{L}_{a j}(v_k)
|0 \rangle_a | 1 \rangle_j
&=\mu(C), \\
\# {}_a \langle 1| {}_j \langle 0 | \mathcal{L}_{a j}(v_k)
|1 \rangle_a | 0 \rangle_j
&=n(n-1)/2-\nu(C)-\mu(C), \\
\# {}_a \langle 1| {}_j \langle 1 | \mathcal{L}_{a j}(v_k)
|1 \rangle_a | 1 \rangle_j
&=\nu(C).
\end{align}
\end{proposition}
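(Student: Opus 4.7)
The strategy is to leverage the bijection established in Theorem \ref{onetoone} and convert each counting identity into a combinatorial statement about the corresponding alternating sign matrix $A$. Under the bijection, the $+1$ and $-1$ entries of $A$ correspond respectively to the vertex types ${}_a\langle 0|{}_j\langle 1|\mathcal{L}|1\rangle_a|0\rangle_j$ and ${}_a\langle 1|{}_j\langle 0|\mathcal{L}|0\rangle_a|1\rangle_j$, so the count of the latter equals $\mu(C)$ immediately. Since each row of an ASM sums to $1$, summing over rows gives $\#(+1) - \#(-1) = n$, hence the count of the former equals $\mu(C) + n$.

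To handle the four ``through'' vertex types, which correspond to $0$-entries of $A$, I would introduce the partial sums
\[
r_{ij} := \sum_{\ell < j} A_{i\ell}, \qquad c_{ij} := \sum_{k < i} A_{kj},
\]
each taking values in $\{0,1\}$ by the alternation property of ASMs. The horizontal arrow state to the left of vertex $(i,j)$ is controlled by $r_{ij}$ and the vertical arrow state below by $c_{ij}$, so at a $0$-entry the vertex is of type ${}_a\langle 0|{}_j\langle 0|\mathcal{L}|0\rangle_a|0\rangle_j$ when $r_{ij} = c_{ij} = 0$, of type ${}_a\langle 1|{}_j\langle 1|\mathcal{L}|1\rangle_a|1\rangle_j$ when $r_{ij} = c_{ij} = 1$, and of one of the two mixed types otherwise. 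The central identity then reduces to
\[
\#\{(i,j) : A_{ij} = 0,\ r_{ij} = 0,\ c_{ij} = 0\} = \sum_{\substack{i < k \\ \ell \le j}} A_{ij} A_{k\ell},
\]
which I would prove by expanding $(1-r_{ij})(1-c_{ij})$, summing over $0$-entries, and rearranging the resulting quadruple sums of ASM entries using the row/column sum conservation $\sum_\ell A_{i\ell} = \sum_k A_{kj} = 1$ together with $A_{ij}^2 \in \{0,1\}$. A parallel argument, or equivalently an appeal to the $180^\circ$ rotation symmetry of $\mathrm{ASM}(n)$, which preserves $\nu$ and exchanges the roles of the two ``diagonal'' vertex types, gives the companion identity for the ${}_a\langle 1|{}_j\langle 1|\mathcal{L}|1\rangle_a|1\rangle_j$ count.

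The remaining two counts are then forced. The total $n^2$ combined with the four already-determined values gives the sum of the mixed-type counts equal to $n(n-1) - 2\nu - 2\mu$, and a row-by-row horizontal conservation argument (each row contains matching numbers of horizontal-$0$ and horizontal-$1$ through vertices, up to the boundary asymmetry absorbed into the per-row identity $a_3 - a_4 = 1$) splits this equally. The main obstacle is the algebraic identity $\#\{A_{ij}=0,\,r_{ij}=0,\,c_{ij}=0\} = \nu(C)$: while the partial-sum characterization of vertex types is geometric and direct, reducing the naive indicator expansion to the precise inversion sum requires careful bookkeeping because intermediate rearrangements typically produce expressions that differ from $\nu(C)$ by symmetric terms, and cancelling those terms genuinely uses the ASM-specific alternation property rather than just the row/column sum constraints.
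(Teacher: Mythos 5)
The paper offers no proof of this proposition at all --- it is imported verbatim from \cite{EKLP} (see also \cite{BDZ}) --- so there is no internal argument to measure yours against; your write-up is a genuine addition. Your route is the standard one and is essentially correct: the two $c$-type counts follow from row sums and alternation exactly as you say, and the characterization of the four remaining vertex types at $0$-entries by the pair $(r_{ij},c_{ij})\in\{0,1\}^2$ is right (which of $(0,0)$ and $(1,1)$ corresponds to which $a$-type is convention-dependent, but harmlessly so, since both counts equal $\nu(C)$). The bookkeeping you flag as the main obstacle actually closes more cleanly than you fear. Since alternation forces $r_{ij}=c_{ij}=0$ at every $+1$ entry and $r_{ij}=c_{ij}=1$ at every $-1$ entry, you may sum $(1-r_{ij})(1-c_{ij})$ over \emph{all} positions: the total is $(\mu(C)+n)+\#\{A_{ij}=0,\ r_{ij}=c_{ij}=0\}$, while expanding gives $n^2-\sum_{i,j}r_{ij}-\sum_{i,j}c_{ij}+\sum_{i,j}r_{ij}c_{ij}$ with $\sum_{i,j}r_{ij}=\sum_{i,j}c_{ij}=n^2-n(n+1)/2$ (from $\sum_{i,j}iA_{ij}=\sum_{i,j}jA_{ij}=n(n+1)/2$) and $\sum_{i,j}r_{ij}c_{ij}=\sum_{k<i,\,\ell<j}A_{i\ell}A_{kj}=\nu(C)+\mu(C)$, the last equality using $\sum_{i<k}A_{ij}A_{kj}=-(\text{number of }{-1}\text{'s in column }j)$. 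This yields $\#\{A_{ij}=0,\ r_{ij}=c_{ij}=0\}=\nu(C)$, and your $180^\circ$ rotation (which indeed preserves $\nu$ and complements both partial sums at $0$-entries) gives the companion count.

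The one step that is wrong as stated is the row-by-row conservation claim for splitting the mixed types: already for the identity matrix the first row has two $0$-entries, both with $r_{1j}=1$ and none with $r_{1j}=0$, so the two mixed types are \emph{not} balanced within individual rows, and no per-row correction of the form $a_3-a_4=1$ repairs this. The equality holds only globally, and you should derive it from the same indicator expansion you already set up: both $\#\{A_{ij}=0,\ r_{ij}=1,\ c_{ij}=0\}=\sum_{i,j}r_{ij}(1-c_{ij})$ and $\#\{A_{ij}=0,\ r_{ij}=0,\ c_{ij}=1\}=\sum_{i,j}(1-r_{ij})c_{ij}$ extend legitimately over all $(i,j)$ because the summands vanish at every $\pm1$ entry, and their difference is $\sum_{i,j}r_{ij}-\sum_{i,j}c_{ij}=\sum_{i,j}(i-j)A_{ij}=0$. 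With that replacement the argument is complete.
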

We use \eqref{partitionfunctiontoasm},
Theorem \ref{onetoone} and Proposition \ref{numberoftypes}
to show the following simple formula for a particular type
of the generating function of alternating sign matrices:
\begin{proposition}
\begin{align}
Z_{\mathrm{ASM}}(n,u-1,u,1)=u^{n(n-1)/2}. \label{asmproposition}
\end{align}
\end{proposition}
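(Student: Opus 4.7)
The plan is to exploit the bijection of Theorem \ref{onetoone} together with the closed product formula \eqref{partitionfunctiontoasm} for the domain wall partition function, after specializing all spectral parameters to a single value.

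First I would take the fully homogeneous limit $v_j = v$ for $j=1,\dots,n$ in \eqref{partitionfunctiontoasm}, which collapses the product to $2^{n(n+1)/2}\,v^{n(n+1)/2}$. Simultaneously, I would expand the same partition function as the statistical sum over configurations $C\in\mathrm{6VDW}(n)$ using the six vertex weights read off from \eqref{sixvertexLoperator}, namely $1-\beta v$, $1+\beta v$, $2v$, $1$, $v$ and $v$ for the six configuration types listed in Proposition \ref{numberoftypes}. Substituting the counts from Proposition \ref{numberoftypes} expresses the weight of $C$ purely in terms of $\nu(C)$ and $\mu(C)$; all dependence on $v$ collects into an overall $v^{n(n+1)/2}$ and an overall $2^{\mu(C)+n}$.

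Next I would equate the two expressions and cancel the common factor $2^n v^{n(n+1)/2}$. After factoring out $(1+\beta v)^{n(n-1)/2}$, this yields
\begin{align}
\sum_{C\in\mathrm{6VDW}(n)}
\left(\frac{1-\beta v}{1+\beta v}\right)^{\nu(C)}
\left(\frac{2}{1+\beta v}\right)^{\mu(C)}
=\left(\frac{2}{1+\beta v}\right)^{n(n-1)/2}.
\end{align}
By the bijection of Theorem \ref{onetoone} the sum on the left equals the ASM generating function evaluated at $x=(1-\beta v)/(1+\beta v)$, $y=2/(1+\beta v)$, and $z=1$ (the statistic $\rho$ plays no role because all rows carry the same spectral parameter, so $z$ must be specialized to $1$).

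Finally, I would solve the two parameter equations $(1-\beta v)/(1+\beta v)=u-1$ and $2/(1+\beta v)=u$ simultaneously; both are satisfied by $\beta v=(2-u)/u$. Substituting this into the relation above turns the left-hand side into $Z_{\mathrm{ASM}}(n,u-1,u,1)$ and the right-hand side into $u^{n(n-1)/2}$, giving \eqref{asmproposition}. The only substantive step is the algebraic bookkeeping of vertex weights against the counts of Proposition \ref{numberoftypes}; once that is in place, the change of variables is a one-line check, so I do not anticipate a genuine obstacle beyond being careful with the exponents.
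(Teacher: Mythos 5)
Your proposal is correct and follows essentially the same route as the paper: evaluate the homogeneous domain wall partition function in two ways, cancel the common factors to obtain $\sum_{C}\bigl(\tfrac{1-\beta v}{1+\beta v}\bigr)^{\nu(C)}\bigl(\tfrac{2}{1+\beta v}\bigr)^{\mu(C)}=\bigl(\tfrac{2}{1+\beta v}\bigr)^{n(n-1)/2}$, substitute $u=2/(1+\beta v)$, and invoke the bijection of Theorem~\ref{onetoone}. The bookkeeping of vertex weights against the counts of Proposition~\ref{numberoftypes} and the consistency check $(1-\beta v)/(1+\beta v)=u-1$ are exactly as in the paper's argument.
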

\begin{proof}
We first take the homogeneous limit $v_j \to v$, $j=1,\dots,n$
of the partition function
of the six-vertex model \eqref{partitionfunctiontoasm}:
\begin{align}
Z(v):=Z(\{ v \}_n|\{ 1 \})|_{v_1,\dots,v_n \to v}=(2v)^{n(n+1)/2}. \label{comparisonone}
\end{align}
On the other hand, by the definition
of domain wall boundary partition function
and Proposition \ref{numberoftypes}, we have
\begin{align}
Z(v)=\sum_{C \in \mathrm{6VDW}(n)}(a_0 a_1)^{\nu(C)}
(b_0 b_1)^{n(n-1)/2-\nu(C)-\mu(C)} c_0^{\mu(C)}c_1^{\mu(C)+n},
\label{comparisontwo}
\end{align}
where
$a_0=1-\beta v$, $a_1=v$, $b_0=1+\beta v$, $b_1=v$, $c_0=1$, $c_1=2v$. \\
Combining
\eqref{comparisonone} and \eqref{comparisontwo} and simplifying, one gets
\begin{align}
\sum_{C \in \mathrm{6VDW}(n)}
\left(\frac{1-\beta v}{1+\beta v} \right)^{\nu(C)}
\left( \frac{2}{1+\beta v} \right)^{\mu(C)}=
\left(\frac{2}{1+\beta v} \right)^{n(n-1)/2}. \label{asmpropositionpre}
\end{align}
We make change variable from $v$ to $u:=2/(1+\beta v)$.
Finally, we use Theorem \ref{onetoone}
on the one-to-one correspondence
between the set of alternating sign matrices and
the set of configurations of the domain wall boundary partition
function to replace the
sum over $C \in \mathrm{6VDW}(n)$ and the numbers
$\nu(C)$, $\mu(C)$
by the sum over $A \in \mathrm{ASM}(n)$
and the numbers
$\nu(A)$, $\mu(A)$.
Then
\eqref{asmpropositionpre} can be rewritten as \eqref{asmproposition},
which concludes the proof.
\end{proof}

\begin{example}
For $n=3$, there are 7 alternating sign matrices 
as in \eqref{ASM}.
Summing up all the corresponding factors,
we have
$Z_{\mathrm{ASM}}(3,u-1,u,1)=(u-1)^3+2(u-1)^2+(u-1)u+2(u-1)+1=u^3.$
\end{example}
We finally remark that it
may also be possible to derive \eqref{asmproposition}
as a limit of the determinant representation for
$Z_{\mathrm{ASM}}(M,x,y,1)$ in \cite{BDZ}.

\section{Conclusion}
In this paper, we derived a new combinatorial formula
for the Schur polynomials.
The quantum integrability is useful to derive
a formula even for Schur polynomials which is the most
fundamental symmetric polynomials.
The formula expresses Schur polynomials with an additional parameter
besides the spectral parameter.
The other known formula in a similar sense is the Tokuyama formula,
which can be interpreted as a deformation of the Weyl character formula and
the determinant expression.
The representation theoretic meaning of the deformation parameter
in our formula is unknown now which may be worth investigating.
There may be other possibilities to find combinatorial formulae
to express Schur polynomials and other symmetric polynomials in terms
of additional parameters using the power of quantum integrability.
This may be achieved by dealing with partition functions
consisting of different local $L$-operators
or changing global boundary conditions for example.

Besides the traditional problem of the application of partition
functions of the six-vertex model to
the enumeration of alternating sign matrices,
one of the most active line of researches on quantum integrable combinatorics
today is to derive combinatorial formulae
for symmetric polynomials
such as the Cauchy identity, Littlewood identity and so on
by analyzing the transfer matrices or partition functions
of integrable lattice models.
The power of quantum integrable combinatorics is that
one can fuse combinatorics with algebraic analysis
to find and prove identities which seems to be hard
to show in a purely combinatorial or a purely algebraic way.
See \cite{BBF,BMN,MSvertex,BW,BWZ} for finite lattice
and \cite{Bor} for infinite lattice for example
of the recent progresses on this line.

\section*{Acknowledgments}
This work was partially supported by grant-in-aid for
Scientific Research (C) No. 24540393.

\appendix
\def\thesection{\Alph{section}}
\def\reference{\relax\refpar}

\section{Evaluation of the domain wall boundary partition function}
Here we derive the factorized form \eqref{domainwallboundary} of
the domain wall boundary partition function defined by
\begin{align}
Z(\{v\}_N|\{w\}_N):=
\langle 1^N|\prod_{j=1}^N \mathcal{B}_N(v_j)|0^N \rangle,
\label{dwbpf}
\end{align}
where $\mathcal{B}_N(v_j)$ is given by \eqref{B-domain}
and  $\{ w \}_N=\{w_{M+1-N}, \cdots, w_M  \}$.

\begin{proposition} 
The domain wall boundary
partition function \eqref{dwbpf}
is expressed as the following factorized form:
\begin{align}
Z(\{ v \}_N|\{ w \}_N)=\prod_{j=1}^N \frac{2v_j}{w_j^{j}}
\prod_{1 \le j<k \le N}(v_j+v_k). 
\end{align}
\end{proposition}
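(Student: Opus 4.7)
The plan is to determine $Z(\{v\}_N|\{w\}_N)$ by the Izergin--Korepin method, exactly parallel to the argument used for the intermediate scalar product in Lemma~\ref{property}. I would isolate a list of structural properties the partition function must satisfy---symmetries, polynomial structure in one of the spectral variables, a recursion to smaller $N$, and an explicit base case---argue that these conditions determine $Z$ uniquely, and then verify that the proposed product obeys them.

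Symmetry in $\{v_1,\ldots,v_N\}$ follows from $[\mathcal{B}(v),\mathcal{B}(v')]=0$, a direct consequence of the $RLL$-relation \eqref{RLL-boson} at $t=-1$. Symmetry in $\{w_{M+1-N},\ldots,w_M\}$ follows from the twisted intertwiner \eqref{RLL2}--\eqref{Rtildematrix}, which permutes two neighboring column inhomogeneities while preserving the DWBC boundary data. From the matrix elements \eqref{Lop1}--\eqref{Lop2} at $t=-1$ one then sees that, after extracting a Laurent monomial in the $w_j$'s, the partition function is a polynomial of controlled degree in the last spectral parameter $v_N$.

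The core step is the recursion. Specializing $v_N$ to a zero of a boundary vertex weight---for instance $v_N=w_{M}/\beta$, a zero of the $a_0$-weight $1-\beta v_N/w_M$ attached to the first column of the $v_N$-row---forces a single vertex type at that corner; propagating this choice by the ice-rule then freezes the entire row-and-column emanating from it. The residual contribution is an explicit product of frozen weights times $Z(\{v\}_{N-1}|\{w\}_{N-1})$ on the smaller lattice. Combined with symmetry in $\{w\}$, this gives $N$ distinct specialization points for $v_N$, enough to reconstruct the polynomial of degree $N-1$ by Lagrange interpolation once the base case $N=1$ is known; that case is immediate from \eqref{specialmatrixelement}, which yields $\langle 1|\mathcal{B}_1(v_1)|0\rangle=2v_1/w_1$.

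To conclude, I would verify that the claimed product $\prod_{j=1}^N(2v_j/w_j^{j})\prod_{1\le j<k\le N}(v_j+v_k)$ is manifestly symmetric in the $v_j$, satisfies the polynomial/degree condition, matches the base case, and respects the recursion at the specialization point, so that uniqueness finishes the proof. The main obstacle I anticipate is pinning down the correct specialization value compatible with the DWBC freezing and then matching the explicit weight factors produced by the frozen row/column against the $w_j^{-j}$ exponents of the stated formula---a delicate but standard bookkeeping task that must keep careful track of the relabeling $\{w\}_N=\{w_{M+1-N},\ldots,w_M\}$.
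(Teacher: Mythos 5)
Your proposal follows essentially the same route as the paper's Appendix: the Izergin--Korepin method with the same four characterizing properties (symmetry, polynomiality in $v_N$ after dividing by $v_N$, a freezing recursion at a zero of the $a_0$-weight, and the $N=1$ base case), followed by verification that the claimed product satisfies them. The only discrepancies are cosmetic indexing ones---the paper freezes at $v_N=\beta^{-1}w_{M-N+1}$ and its $N=1$ case reads $2v_1/w_M$ in the labeling $\{w\}_N=\{w_{M+1-N},\dots,w_M\}$---which fall under the relabeling bookkeeping you already flagged.
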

\begin{proof}
This can be shown in the standard approach due to 
Izergin and Korepin \cite{Kore,Ize}.
First, one shows the following four conditions.
\begin{lemma}
The domain wall boundary partition function
\eqref{dwbpf} satisfies the
following properties. 
\begin{enumerate}
\item $\prod_{j=M+1-N}^M w_j^{N+j-M} Z(\{ v \}_N|\{ w \}_N)$
is symmetric with respect to
the variables $\{w\}_N$. 
\item $Z(\{ v \}_N| \{ w \}_N)/v_N$ 
is a polynomial of degree $N-1$ in $v_N$. 
\item
The following recursive relations between the
domain wall boundary partition functions hold:
\begin{align}
Z(&\{ v \}_N|\{ w \}_N)|_{v_N=\beta^{-1} w_{M-N+1}} \nn \\
&=-\frac{2}{\beta} \prod_{j=M-N+2}^M \left(-\frac{w_{M-N+1}}{\beta w_j}\right)
 \prod_{j=1}^{N-1}\left(1-\frac{\beta v_j}{w_{M-N+1}}\right)
Z(\{ v \}_{N-1}|\{ w \}_{N-1}).
\label{rec-dw}
\end{align}
\item  The case $N=1$ of the domain wall boundary partition function
following form:
\begin{align}
Z(\{ v \}_1|\{ w \}_1)=\frac{2v_1}{w_M}.
\end{align}
\end{enumerate}
\end{lemma}
Properties 2, 3 and 4 can be shown easily with the help of
the graphical representation (see figure~\ref{dw-pic} for Property 3)
of the domain wall boundary partition function 
Let us explain Property 1.
\begin{figure}[ttt]
\begin{center}
\includegraphics[width=0.99\textwidth]{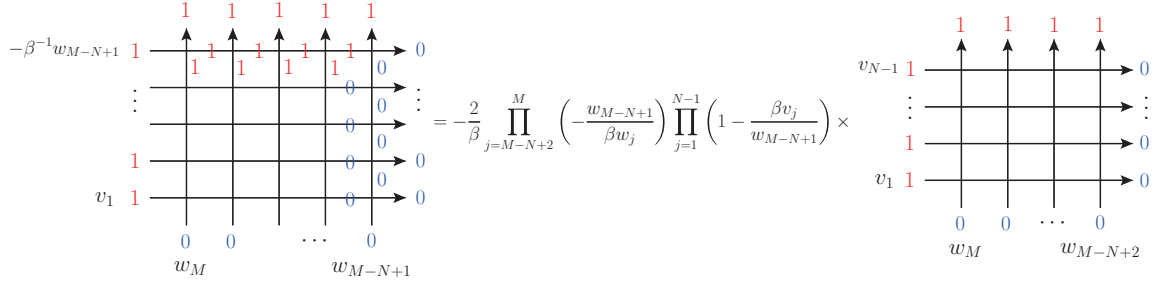}
\end{center}
\caption{A graphical description of \eqref{rec-dw}.}
\label{dw-pic}
\end{figure}
First note that the domain wall boundary partition function
$Z(\{ v \}_N|\{ w \}_N)$
can be re-expressed using the transfer matrix
$T_j(w_{M+1-j})=
\prod_{a=1}^N \mathcal{L}_{a j}(v_{a}/w_{M+1-j})
\in \End( W^{\otimes N} \otimes \mathcal{F}_j)$
propagating in the horizontal direction
as
\begin{align}
Z(\{ v \}_N|\{ w \}_N)={}_a \langle 1^N| C(w_{M+1-N}) \cdots 
C(w_M) 
|0^N \rangle {}_a,
\label{reexp}
\end{align}
where
\begin{align}
T_j(w_{M+1-j})
&=\begin{pmatrix}
A(w_{M+1-j}) & B(w_{M+1-j}) \\
C(w_{M+1-j}) &  D(w_{M+1-j})
\end{pmatrix}_j,
\end{align}
From the $RLL$ relation, one has
\begin{align}
C(w_k)w_j C(w_j)=C(w_j) w_k C(w_k).
\label{deg}
\end{align}
Combining \eqref{reexp} and \eqref{deg} shows Property 1.

The remaining thing to do is to find the explicit forms
of the functions satisfying the properties in the Lemma.
One can easily show that
\begin{align}
Z(\{ v \}_N|\{ w \}_N)=\prod_{j=1}^N (2 v_j) \prod_{j=M+1-N}^M w_j^{M-N-j}
\prod_{1 \le j<k \le N} (v_j+v_k),
\end{align}
satisfies the above four properties.
\end{proof}


\begin{thebibliography}{00}
%
\bibitem{OR}
A. Okounkov and N. Reshetikhin,
{\it Correlation function of Schur process with application
to local geometry of a random 3-dimensional Young diagram},
J. Amer. Math. Soc. {\bf 16} (2003) 581-603.
[arXiv:math/0107056v3]
%
\bibitem{To}
T. Tokuyama,
{\it A generating function of strict Gelfand patterns and some
formulas on characters of general linear groups},
J. Math. Soc. Japan {\bf 40} (1988) 671-685.
%
\bibitem{Ok}
S. Okada,
{\it Alternating sign matrices and some deformations of Weyl's denominator
formula},
J. Algebraic Comb. {\bf 2} (1993) 155-176.
%
\bibitem{HK}
A. M. Hamel and R. C. King,
{\it Bijective proofs of shifted tableau and
alternating sign matrix identities},
J. Algebraic Combin. {\bf 25} (2007) 417-458.
[arXiv:math/0507479]
%
\bibitem{BBF}
B. Brubaker, D. Bump and S. Friedberg,
{\it Schur polynomials and the Yang-Baxter equation},
Commun. Math. Phys. {\bf 308} (2011) 281-301.
[arXiv:0912.0911v3]
%
\bibitem{BMN}
D. Bump, P. McNamara and M. Nakasuji,
{\it Factorial Schur functions and the Yang-Baxter equation},
Comm. Math. Univ. St. Pauli {\bf 63} (2014) 23-45.
[arXiv:1108.3087]
%
\bibitem{Tab}
S. J. Tabony,
{\it Deformations of characters, metaplectic Whittaker functions
and the Yang-Baxter equation},
PhD. Thesis,
Massachusetts Institute of Technology, USA, 2011.
%
\bibitem{BBCG}
B. Brubaker, D. Bump, G. Chinta and P.E. Gunnells,
{\it Metaplectic functions and crystals od type B},
in {\it Multiple Dirichlet Series, L-Functions, and Automorphic Forms},
D. Bump, S. Friedberg and D. Goldfield, eds.,
{\it Progress in Math.} {\bf 300} (2012) Birkhauser Boston, 93-118.
%
\bibitem{BS}
B. Brubaker and A. Schultz,
{\it Deformations of the Weyl character formula for classical groups
and the six-vertex model}.
[arXiv:1402.2339v2]
%
\bibitem{HK2}
A. M. Hamel and R. C. King,
{\it Half-turn symmetric alternating sign matrices and Tokuyama type
factorisation for orthogonal group characters}.
[arXiv:1402.6773v1]
%
\bibitem{SW}
T. Sasamoto and M. Wadati,
{\it Exact results for one-dimensional totally asymmetric diffusion models},
J. Phys. A: Math. Gen. {\bf 31} (1998) 6057-6071.
%
\bibitem{Ta}
Y. Takeyama,
{\it A discrete analogue of periodic delta Bose gas and affine Hecke algebra},
Funckeilaj Ekvacioj {\bf 57} (2014) 107-118.
[arXiv:1209.2758]
%
\bibitem{BN}
N. M. Bogoliubov and M. Nassar,
{\it On the spectrum of the non-Hermitian phase difference model},
Phys. Lett. A {\bf 234} (1997) 345-350.
%
\bibitem{MS}
K. Motegi and K. Sakai,
{\it $K$-theoretic boson-fermion correspondence and melting crystals},
J. Phys. A: Math. Theor. {\bf 47} (2014) 445202.
[arXiv:1311.6076]
%
\bibitem{LS}
A. Lascoux and M. Schutzenberger
{\it Structure de Hopf de lanneau de cohomologie et de lanneau de
Grothendieck dune variete de drapeaux},
C. R. Acad. Sci. Paris Ser. I Math. {\bf 295} (1982) 629-633.
%
\bibitem{FK}
S. Fomin and A. N. Kirillov,
{\it Grothendieck polynomials and the Yang-Baxter equation},
Proc. 6th Int. Conf. on Formal Power Series and Algebraic Combinatorics
(DIMACS) (1994) pp 183-190.
%
\bibitem{Bu}
A. Buch,
{\it A Littlewood-Richardson rule for the K-theory of Grassmannians},
Acta Math. {\bf 189} (2002) 37-78.
[arXiv:math/0004137]
%
\bibitem{IN}
T. Ikeda and H. Naruse,
{\it K-theoretic analogues of factorial Schur P- and Q-functions},
Adv. Math. {\bf 243} (2013) 22-66.
[arXiv:1112.5223]
%
\bibitem{KN}
A. N. Kirillov and H. Naruse,
{\it Construction of double Grothendieck polynomials of classical types
using Id-Coxeter algebras}.
[arXiv:1504.08089]
%
\bibitem{Bo}
N. M. Bogoliubov,
{\it Boxed plane partitions as an exactly solvable boson model},
J. Phys. A: Math. Gen. {\bf 38} 9415-9430.
%
\bibitem{SU}
K. Shigechi and M. Uchiyama,
{\it Boxed skew plane partition and integrable phase model},
J. Phys. A: Math. Gen. {\bf 38} 10287-10306.
[arXiv:cond-mat/0508090]
%
\bibitem{KS}
C. Korff and C. Stroppel,
{\it The sl(n)-WZNW Fusion Ring: a combinatorial construction and a realisation
as quotient of quantum cohomology},
Adv. Math. {\bf 225} (2010) 200-268,
[arXiv:0909.2347]
%
\bibitem{Wh}
M. Wheeler,
{\it Free fermions in classical and quantum integrable models},
PhD thesis,
Department of Mathematics and Statistics, University of Melbourne.
[arXiv:1110.6703v1]
%
\bibitem{Ts}
N.V. Tsilevich,
{\it Quantum Inverse Scattering Method for the $q$-Boson Model
and Symmetric functions},
Funct. Anal. Appl. {\bf 40} (2006) 207-217.
[arXiv:math-ph/0510073v1]
%
\bibitem{Kor}
C. Korff,
{\it Cylindric versions of specialised Macdonald functions and a
deformed Verlinde algebra},
Commun. Math. Phys. {\bf 318} (2013) 173-246.
[arXiv:1110.6356v3]
%
\bibitem{Kore}
V. E. Korepin,
{\it Calculation of Norms of Bethe Wave Functions},
Commun. Math. Phys. {\bf 86} (1982) 391-418.
%
\bibitem{Ize}
A. G. Izergin,
{\it Partition Function of the 6-Vertex Model in a Finite Volume},
Sov. Phys. Dokl. {\bf 32} (1987) 878.
%
\bibitem{Wheeler2}
M. Wheeler,
{\it An Izergin-Korepin procedure for calculating scalar products
in the six-vertex model},
Nucl. Phys. B {\bf 852} (2011) 468.
[arXiv:1104.2113]
%
\bibitem{MSvertex}
K. Motegi and K. Sakai,
{\it Vertex models, TASEP and Grothendieck polynomials},
J. Phys. A: Math. Theor. {\bf 46} (2013) 355201.
[arXiv:1305.3030]
%
\bibitem{Bre}
D. Bressoud,
{\it Proofs and confirmations: The story of the
alternating sign matrix conjecture},
MAA Spectrum, Mathematical Association of America, Washington, DC, 1999.
%
\bibitem{Ku}
G. Kuperberg,
{\it Another proof of the alternating-sign matrix conjecture},
Int. Math. Res. Not. {\bf 3} (1996) 139-150.
[arXiv:math/9712207]
%
\bibitem{Ku2}
G. Kuperberg,
{\it Symmetry classes of alternating-ssign matrices under one roof},
Ann. Math. {\bf 156} (2002) 835-866.
[arXiv:math/0008184]
%
\bibitem{Ok2}
S. Okada,
{\it Enumeration of symmetry classes of alternating sign matrices and
characters of classical groups},
J. Alg. Comb. {\bf 23} (2001) 43-69.
[arXiv:math/0408234]
%
\bibitem{CP}
F. Colomo and A. G. Pronko,
{\it Square ice, alternating sign matrices and
classical orthogonal polynomials},
J. Stat. Mech. (2005) P01005.
[arXiv:math-ph/0411076]
%
\bibitem{CP2}
F. Colomo and A. G. Pronko,
{\it The role of orthogonal polynomials in the six-vertex model
and its combinatorial applications},
J. Phys. A: Math. Gen. {\bf 39} (2006) 9015-9033.
[arXiv:math-ph/0602033]
%
\bibitem{BCS}
P. Biane, L. Cantini and A. Sportiello,
{\it Doubly-refined enumeration of alternating sign matrices and determinants
of 2-staircase Schur functions},
S\'eminaire Lotharingien de Combinatorie, B {\bf 65f} (2012).
[arXiv:1101.3427]
%
\bibitem{AR}
A. Ayyer and D. Romnik,
{\it New enumeration formulas for alternating sign matrices and
square ice partition functions},
Adv. Math. {\bf 235} (2013) 161-186.
[arXiv:1202.3651]
%
\bibitem{BDZ}
R. Behrend, P. Di Francesco and P. Zinn-Justin,
{\it On the weighted enumeration of alternating sign matrices and descending plane paritions},
J. Combin. Theory Ser. A {\bf 119} (2012) 331-363.
[arXiv:1103.1176]
%
\bibitem{Be}
R. Behrend,
{\it Multiply-refined enumeration of alternating sign matrices},
Adv. Math. {\bf 245} (2013) 439-499.
[arXiv:1203.3187]
%
\bibitem{EKLP}
N. Elkies, G. Kuperberg, M. Larsen and J. Propp,
{\it Alternating sign matrices and domino tilings},
J. Alg. Comb. {\bf 1} (1992) 111-132, 219-234.
[arXiv:math/9201305]
%
\bibitem{BW}
D. Betea and M. Wheeler,
{\it Refined Cauchy and Littlewood identities, plane partitions and symmetry
classes of alternating sign matrices},
[arXiv:1402.0229]
%
\bibitem{BWZ}
D. Betea, M. Wheeler and P. Zinn-Justin,
{\it Refined Cauchy/Littlewood identities and six-vertex model partition functions: II. Proofs and new conjectures},
[arXiv:1405.7035]
%
\bibitem{Bor}
A. Borodin,
{\it On a family of symmetric rational functions},
[arXiv:1410.0976]

\end{thebibliography}
\end{document}